\newtheorem{definition}{Definition}[section]
\newtheorem{lemma}{Lemma}[section]
\newtheorem{remark}{Remark}[section]
\newtheorem{theorem}{Theorem}[section]
\newtheorem{corollary}{Corollary}[section]
\numberwithin{equation}{section}
\numberwithin{figure}{section}
\newcommand{\tkz}{\tikzexternaldisable}
    \pgfmathsetlength{\pgf@xb}{\pgfkeysvalueof{/pgf/outer xsep}}%
    \pgfmathsetlength{\pgf@yb}{\pgfkeysvalueof{/pgf/outer ysep}}%
\def\S{\tikz[baseline=-2.8,scale=0.15]{\node[T1] {};}} 
\def\U{\tikz[baseline=-2.8,scale=0.15]{\node[T2] {};}} 
\def\A{\tikz[baseline=-2.8,scale=0.15]{\node[A] {};}} 
\def\X{\tikz[baseline=-2.8,scale=0.15]{\node[X] {};}} 
\def\Zdotred{\tikz[baseline=-2.8,scale=0.15]{\node[Z] {};}} 
\def\M{\tikz[baseline=-2.8,scale=0.15]{\node[M] {};}} 
\def\AAd{\tikz[baseline=-1,scale=0.15]{\draw (-1,1) node[A] {} -- (0,0) node[not] {} -- (1,1) node[A] {};}} 
\def\XXd{\tikz[baseline=-1,scale=0.15]{\draw (-1,1) node[X] {} -- (0,0) node[not] {} -- (1,1) node[X] {};}} 
\def\ZZd{\tikz[baseline=-1,scale=0.15]{\draw (-1,1) node[Z] {} -- (0,0) node[not] {} -- (1,1) node[Z] {};}} 
\def\MMd{\tikz[baseline=-1,scale=0.15]{\draw (-1,1) node[M] {} -- (0,0) node[not] {} -- (1,1) node[M] {};}} 
\def\T1T2d{\tikz[baseline=-1,scale=0.15]{\draw (-1,1) node[T1] {} -- (0,0) node[not] {} -- (1,1) node[T2] {};}} %
 \def\XZd{\tikz[baseline=-1,scale=0.15]{\draw (-1,1) node[X] {} -- (0,0) node[not] {} -- (1,1) node[Z] {};}} 
\def\XMd{\tikz[baseline=-1,scale=0.15]{\draw (-1,1) node[X] {} -- (0,0) node[not] {} -- (1,1) node[M] {};}} 
\def\MXd{\tikz[baseline=-1,scale=0.15]{\draw (-1,1) node[M] {} -- (0,0) node[not] {} -- (1,1) node[X] {};}}
\def\AAdAAdd{\tikz[baseline=-1,scale=0.15]{\draw (0,0) node[not] {} -- (-1,1) node[not] {};
\draw (0,0) -- (1,1) node[not] {};
\draw (-1,1) -- (-1.5,2.5) node[A] {};
\draw (-1,1) -- (-0.5,2.5) node[A] {};
\draw (1,1) -- (0.5,2.5) node[A] {};
\draw (1,1) -- (1.5,2.5) node[A] {};}}
\def\AAdAdAd{\tikz[baseline=-1,scale=0.15]{\draw (0,0) node[not] {} -- (-1,1) node[not] {}
-- (-2,2) node[not]{} -- (-3,3) node[A]  {};
\draw (0,0) -- (1,1) node[A] {};
\draw (-1,1) -- (0,2) node[A] {};
\draw (-2,2) -- (-1,3) node[A] {};}}
\def\AAdAd{\tikz[baseline=-1,scale=0.15]{\draw (0,2) node[A] {} -- (-1,1) ; \draw (-2,2)  node[A] {} -- (-1,1) ; \draw (-1,1)  node[not] {} -- (0,0); 
 \draw (0,0) node[not] {}  -- (1,1) node[A] {};}}
\def\MXdXd{\tikz[baseline=-1,scale=0.15]{
\draw (0,0) node[not] {} -- (-1,1) node[not] {}
-- (-2,2) node[M]{} ;
\draw (0,0) -- (1,1) node[X] {};
\draw (-1,1) -- (0,2) node[X] {};
}}
\def\XMdMd{\tikz[baseline=-1,scale=0.15]{
\draw (0,0) node[not] {} -- (-1,1) node[not] {}
-- (-2,2) node[X]{} ;
\draw (0,0) -- (1,1) node[M] {};
\draw (-1,1) -- (0,2) node[M] {};
}}
\def\MMdXd{\tikz[baseline=-1,scale=0.15]{
\draw (0,0) node[not] {} -- (-1,1) node[not] {}
-- (-2,2) node[M]{} ;
\draw (0,0) -- (1,1) node[X] {};
\draw (-1,1) -- (0,2) node[M] {};
}}
\def\MXdXdXd{\tikz[baseline=-1,scale=0.15]{
\draw (0,0) node[not] {} -- (-1,1) node[not] {}
-- (-2,2) node[not]{}  -- (-3,3) node[M]{};
\draw (0,0) -- (1,1) node[X] {};
\draw (-1,1) -- (0,2) node[X] {};
\draw (-2,2) -- (-1,3) node[X] {};
}}
\def\AAdAddAAd{\tikz[baseline=-1,scale=0.15]{
\draw (0,0) node[not] {} -- (-1,1) node[not] {}
-- (-2,2) node[not]{} -- (-3,3) node[A]  {};
\draw (-1,1) -- (-.5,2) node[A] {};
\draw (-2,2) -- (-1,3) node[A] {};
\draw (0,0) -- (1,1) node[not] {};
\draw (1,1) -- (2,2) node[A] {};
\draw (1,1) -- (.5,2) node[A] {};
}}
\def\AAdAAddAd{\tikz[baseline=-1,scale=0.15]{
\draw (1,0) node[not] {} -- (0,1) node[not] {};
\draw (1,0) node[not] {} -- (2,1) node[A] {};
\draw (0,1) node[not] {} -- (-1,2) node[not] {};
\draw (0,1) -- (1,2) node[not] {};
\draw (-1,2) -- (-1.5,3.5) node[A] {};
\draw (-1,2) -- (-0.5,3.5) node[A] {};
\draw (1,2) -- (0.5,3.5) node[A] {};
\draw (1,2) -- (1.5,3.5) node[A] {};}}
\def\AAdAdAdAd{\tikz[baseline=-1,scale=0.15]{
\draw (0,0) node[not] {} -- (-1,1) node[not] {}
-- (-2,2) node[not]{} -- (-3,3) node[not]  {} 
-- (-4,4) node[A]{};
\draw (0,0) -- (1,1) node[A] {};
\draw (-1,1) -- (0,2) node[A] {};
\draw (-2,2) -- (-1,3) node[A] {};
\draw (-3,3) -- (-2,4) node[A] {};}}
\colorlet{symbols}{blue!90!black}
\colorlet{testcolor}{green!60!black}
\colorlet{connection}{red!30!black}
\tikzset{
root/.style={circle,fill=black!50,inner sep=0pt, minimum size=3mm},
        dot/.style={circle,fill=black,inner sep=0pt, minimum size=1.5mm},
        T1/.style={very thin,circle,fill=BrickRed!100,draw=black,inner sep=0pt,minimum size=1.2mm},         		T2/.style={very thin,circle,fill=BurntOrange!100,draw=black,inner sep=0pt,minimum size=1.2mm},         	A/.style={very thin,circle,fill=BlueGreen!100,draw=black,inner sep=0pt,minimum size=1.2mm},         		X/.style={very thin,circle,fill=Gray!80,draw=black,inner sep=0pt,minimum size=1.2mm},         			Z/.style={very thin,circle,fill=RubineRed!100,draw=black,inner sep=0pt,minimum size=1.2mm},         		M/.style={very thin,circle,fill=YellowOrange!100,draw=black,inner sep=0pt,minimum size=1.2mm},         
	not/.style={thin,circle,fill=symbols,draw=connection,fill=connection,inner sep=0pt,minimum size=0.35mm},
	>=stealth,
        }
\newcommand{\R}{\mathbb{R}}
\newcommand{\BS}{\rm BS}
\newcommand{\p}{\partial}
\newcommand{\var}{{\rm var}}
\newcommand{\cov}{{\rm cov}}
\newcommand{\beas}{\begin{eqnarray*}}
\newcommand{\eeas}{\end{eqnarray*}}
\newcommand{\bea}{\begin{eqnarray}}
\newcommand{\eea}{\end{eqnarray}}
\newcommand{\tmop}{\end{eqnarray}}
\newcommand{\ben}{\begin{enumerate}}
\newcommand{\een}{\end{enumerate}}
\newcommand{\ui}{\mathrm{i}}
\newcommand{\e}{\mathrm{e}}
\newcommand{\dm}{\diamond}
\newcommand{\cS}{\mathcal{S}}
\newcommand{\cL}{\mathcal{L}}
\newcommand{\cF}{\mathcal{F}}
\newcommand{\mF}{\mathbb{F}}
\newcommand{\mT}{\mathbb{T}}
\newcommand{\mK}{\mathbb{K}}
\newcommand{\cG}{\mathcal{G}}
\newcommand{\cC}{\mathcal{C}}
\newcommand{\D}{\mathrm{d}}
\newcommand{\cE}{\mathcal{E}}
\newcommand{\RR}{\mathbb{R}}
\newcommand{\mG}{\mathbb{G}}
\newcommand{\tmF}{\tilde {\mathbb{F}}}
\newcommand{\cO}{\mathcal{O}}
\newcommand{\bi}{\begin{itemize}}
\newcommand{\ei}{\end{itemize}}
\newcommand{\beq}{\begin{equation}}
\newcommand{\eeq}{\end{equation}}
\newcommand{\EE}{\mathbb{E}}
\newcommand{\ee}[1]{\ensuremath{\mathbb{E}\left[{#1}\right]}}
\newcommand{\eef}[1]{\ensuremath{\mathbb{E}\left[\left.{#1}\right|\cF_t\right]}}
\newcommand{\cEE}[1]{\cE\left({#1}\right)}
\newcommand{\eet}[1]{\mathbb{E}_{t}\left[#1\right]}
\newcommand{\angl}[1]{\langle{#1}\rangle}
\begin{document}
\title{\bf Diamonds and forward variance models}

\author{Peter Friz, TU and WIAS Berlin.\\{\tt friz@math.tu-berlin.de}
\\$~~$\\Jim Gatheral, Baruch College, CUNY,\\ {\tt jim.gatheral@baruch.cuny.edu}}

\date{April 22, 2022}

\maketitle\thispagestyle{empty}

\begin{abstract}
In this non-technical introduction to diamond trees and forests, we focus on their application to computation in stochastic volatility models written in forward variance form, rough volatility models in particular.
\end{abstract}

\section{Diamond trees and forests}

From~\cite{alos2020exponentiation, friz2022forests}, we have the following definition of the diamond product.
\begin{definition} \label{def:diamond}
Given two continuous semimartingales $A,B$ with integrable covariation process $\langle A , B \rangle$, the diamond product is defined by
$$
     (A \dm B)_t (T) := \eef{\langle A , B \rangle_{t,T}} = \eef{\langle A , B \rangle_{T}} -\langle A , B \rangle_t \; .
$$
\end{definition} 
 \noindent  We shall assume that all martingales admit a continuous version, it is then clear that $A \dm B$ is another continuous semimartingale (but in general, due to the covariation term $\langle A , B \rangle$, not a martingale). The diamond product has the following properties.

\bi
\item It is commutative: $A \dm B = B \dm A$.
\item It is non-associative: $(A \dm B) \dm C \neq A \dm (B  \dm C)$.
\item $A \dm B$ depends only on the respective local martingale parts of $A$ and $B$.
\ei
Diamond products of (sufficiently integrable, continuous) semimartingales 
naturally lead to binary ``diamond'' trees 
such as $(A \dm B) \dm C$.  
Diagrammatically, the diamond product of two trees $\mT_1$ and $\mT_2$ is represented by {\it root joining}, 
\tkz
$$
\mT_1 \dm \mT_2 = \T1T2d,
$$
where the two binary trees $\mT_1$ and $\mT_2$ are represented as the single leaves $\S$ and $\U$. We regard linear combinations of diamond trees, as seen in~\eqref{some_forests} below, as {\em forests}.   This tree formalism is extremely convenient when it comes to doing explicit computations as we will see later on.

The following theorem, adapted from~\cite{friz2022forests}, expresses the cumulant generating function of any continuous semimartingale and its quadratic variation as a sum of forests of diamond trees.

\begin{theorem}
\label{thm:mainintroG}
Let $Y_T$ be a real-valued, $\cF_T$-measurable random variable with associated martingale $Y_t=\eet{Y_T}$.
Under natural integrability conditions, with $a,b$ small enough, there is a.s. convergence of
\begin{equation}  \label{eq:AsymCG}
\log \eef{\e^{a Y_T + b \langle Y \rangle_T}} = 
a Y_t + b \langle Y \rangle_t 
+  \sum_{k\ge2}   \mG_t^k (T),
  \end{equation}
where 
\vspace{-20pt}
\bea \label{eq:Grec}
\mG^2  &=& \left( \frac{1}{2} a^2+ b\right)
(Y \dm Y)_t(T)  ,
\nonumber \\
 \mG^k &=& \frac{1}{2} \sum_{j = 2}^{k -2} \mG^{k - j} \dm \mG^j +  (a \,Y  \dm \mG^{k - 1}) \text{ for } k>2.
\eea
For the multivariate case, $Y = (Y^1,...,Y^d)$ and $a \in \R^d, b \in \R^{d \times d} $, we replace $a Y$ and $b \langle Y \rangle$ by $\sum_i a_i Y^i $ and $ \sum_{i,j }b_{ij} \langle Y^i, Y^j \rangle$, and further have $\mG^2 = \sum_{i,j } \left( \frac{1}{2}a_i a_j + b_{ij }\right)
(Y^i \dm Y^j)_t(T)$.

\end{theorem}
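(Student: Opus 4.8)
The plan is to work with the positive martingale $M_t := \eef{\e^{a Y_T + b \langle Y \rangle_T}}$, which is well-defined and a genuine martingale under the integrability hypotheses, and to analyze its logarithm $\Psi_t := \log M_t$, noting $\Psi_T = a Y_T + b \langle Y \rangle_T$. First I would apply \IF\ to $\log M_t$:
$$
\dd \Psi_t = \frac{\dd M_t}{M_t} - \frac{1}{2}\,\frac{\dd \langle M \rangle_t}{M_t^2} = \dd N_t - \frac{1}{2}\,\dd \langle \Psi \rangle_t ,
$$
where $N_t := \int_0^t M_s^{-1}\,\dd M_s$ is a local martingale and $\langle \Psi \rangle = \langle N \rangle$. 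Integrating over $[t,T]$, taking $\eef{\cdot}$, and using that $N$ is a true martingale (this is precisely where the integrability is consumed) gives the fixed-point identity
$$
\Psi_t = \eef{\Psi_T} + \frac{1}{2}\,\eef{\langle \Psi \rangle_{t,T}} = a Y_t + b \langle Y \rangle_t + b\,(Y \dm Y)_t(T) + \frac{1}{2}\,(\Psi \dm \Psi)_t(T),
$$
where I used $\eef{Y_T} = Y_t$, the Definition~\ref{def:diamond} of $\dm$ applied to $\eef{\langle Y\rangle_{t,T}}$, and the fact that $\dm$ depends only on local-martingale parts.

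Next I would set $\Psi_t = a Y_t + b \langle Y \rangle_t + \mG_t$ with $\mG := \sum_{k \ge 2}\mG^k$ and exploit bilinearity and commutativity of $\dm$. Since $\dm$ annihilates the finite-variation term $b\langle Y\rangle$, one has $\Psi \dm \Psi = a^2 (Y\dm Y) + 2a\,(Y \dm \mG) + \mG \dm \mG$, so the identity above collapses to the single closed equation
$$
\mG = \Big(\tfrac{1}{2} a^2 + b\Big)(Y \dm Y) + a\,(Y \dm \mG) + \tfrac{1}{2}\,\mG \dm \mG .
$$
To extract~\eqref{eq:Grec} I would introduce a bookkeeping parameter via $a \mapsto \varepsilon a$, $b\mapsto \varepsilon^2 b$, under which the equation is invariant once $\mG^k$ is assigned weight $\varepsilon^k$ (indeed $Y\dm\mG^{k-1}$ carries weight $\varepsilon\cdot\varepsilon^{k-1}$ and $\mG^{k-j}\dm\mG^j$ carries weight $\varepsilon^k$). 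Matching powers of $\varepsilon$ then yields $\mG^2 = (\tfrac{1}{2} a^2 + b)(Y\dm Y)$ at order two and the stated recursion for $k>2$; equivalently, the partial sums $\sum_{k=2}^{K}\mG^k$ are exactly the Picard iterates of the map $\mG \mapsto F(\mG)$ given by the closed equation. The multivariate case is verbatim, the sole change being that bilinearity turns $\eef{\sum_i a_i Y^i_T + \sum_{i,j} b_{ij}\langle Y^i, Y^j\rangle_T}$ into $\mG^2 = \sum_{i,j}(\tfrac{1}{2} a_i a_j + b_{ij})(Y^i \dm Y^j)_t(T)$.

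The algebra above is purely formal, and the substantive point, which I expect to be the main obstacle, is to justify the a.s.\ convergence of $\sum_{k\ge2}\mG^k_t(T)$ and the legitimacy of the manipulations for $a,b$ small. Concretely one needs quantitative bounds (in $L^p$, or on exponential moments) showing that the diamond trees $\mG^k_t(T)$ decay geometrically in $k$ once $|a|,|b|$ lie below an explicit threshold tied to an integrability gain such as $\e^{a Y_T + b\langle Y\rangle_T}\in L^{1+\delta}$. Such estimates simultaneously certify that $M$ is a true martingale (validating the vanishing of $\eef{N_T - N_t}$) and that $F$ is a contraction on the relevant space, which gives uniqueness of the fixed point and identifies the convergent series with $\Psi_t$. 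These analytic bounds, together with the precise form of the ``natural integrability conditions,'' are inherited from~\cite{friz2022forests}, so the work here is mainly to transcribe the fixed-point and grading argument rather than to reprove the estimates.
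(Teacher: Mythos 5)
Your proposal is correct and takes essentially the same route as the paper: your It\^o computation for $\Psi_t=\log M_t$ is precisely the paper's stochastic-logarithm argument yielding the ``abstract Riccati'' fixed-point equation~\eqref{eq:Master}, and your grading $a\mapsto\varepsilon a$, $b\mapsto\varepsilon^2 b$ is the paper's substitution $Z=\epsilon a Y+\epsilon^2 b\langle Y\rangle$ followed by matching coefficients of $\epsilon$. Like the paper's own (sketch) proof, you defer the quantitative convergence estimates behind the ``natural integrability conditions'' to~\cite{friz2022forests}, so nothing is missing relative to what the paper itself proves.
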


\begin{proof}[Proof of Theorem~\ref{thm:mainintroG}]
(Sketch) For a generic (continuous) semimartingale $Z$, sufficiently integrable, let
\[
\Lambda_t^T = \log \eet{\e^{Z_{t,T}}}.
\]
Then,  noting that $\Lambda_T^T=0$,
\[
\eet{\e^{Z_T}} = \eet{\e^{Z_T+\Lambda^T_T}} = \e^{Z_t + \Lambda_t^T}. 
\]
The stochastic logarithm $\cL\left(\EE_\bullet(Z_T)\right) = Z+ \Lambda^T +\tfrac12 \angl{Z+ \Lambda^T}$ is a martingale.  Thus, 
\bea
 \Lambda_t^T &=& \eet{Z_{t,T}  +\tfrac12 \angl{Z+ \Lambda^T}_{t,T} } \nonumber\\
 &=& \eet{Z_{t,T}} + \tfrac12( (Z+ \Lambda^T)\dm(Z+ \Lambda^T))_t(T).
 \label{eq:Master}
\eea
Now with\footnote{Recall that terms of bounded variation such as $\angl{Y}$ do not contribute to diamond products.} 
$Z = \epsilon a Y + \epsilon^2\,b \angl{Y}$ we get
\beas
    \Lambda_t^T (\epsilon)  =\epsilon a\, \eet{Y_{t,T}} +\epsilon^2 \,b\, (Y \dm Y)_t(T) + 
     { \tfrac{1}{2} \left(\epsilon a Y  + \Lambda_t^T (\epsilon)\right)^{\dm 2} _t(T)  }  \;.
\eeas 
Put $\Lambda_t^T (\epsilon) =\epsilon^{2} \mG^2_t + \epsilon^{3} \mG^3_t + ...$, and match coefficients of $\epsilon^n$ to obtain the result.

\end{proof}

\noindent The process $\Lambda^T$ 
 is the correction required for $\e^{Z+\Lambda^T}$ to be a martingale, with $\Lambda^T_T=0$ .  In particular, if $\e^Z$ is already a martingale, then 
$\Lambda^T \equiv 0$ on $[0,T]$.  
\begin{remark}
Note that the convergent $\mG$-sum is exactly equal to $\Lambda^T$, which satisfies the ``abstract Riccati'' equation~\eqref{eq:Master}, 
\[
  \Lambda_{t}^T =   \mathbb{E}_t Z_{t,T} + \tfrac{1}{2} \left( ( Z + \Lambda^T) \dm ( Z + \Lambda^T) \right)_t(T) .
\]

\end{remark}

To make the recursion~\eqref{eq:Grec} more concrete, consider the first few forests written diagrammatically
with \tkz$\A$ \tkz as a short-hand for $Y$, interpreted as single leaf:
\tkz \begin{eqnarray} \label{some_forests}
    \mG^2  &=&  (\tfrac{1} 2 a^2+b)\, \AAd \nonumber\\
\mG^3 &=& a\, (\tfrac 12 a^2+b)\, \AAdAd \nonumber\\
\mG^4 &=& \tfrac1{2} (\tfrac 12 a^2+b)^2\,\AAdAAdd+ 
a^2\, (\tfrac 12 a^2+b)\,  \AAdAdAd
\nonumber\\
\mG^5 
&=& a\, (\tfrac 12 a^2+b)^2\,\,\AAdAddAAd
+  \tfrac1{2} a\,(\tfrac 12 a^2+b)^2 \,\AAdAAddAd
+a^3\, (\tfrac 12 a^2+b)\,\,\AAdAdAdAd    \label{eq:Gforests}
      \end{eqnarray}  \tkz 

We note that when $b = - \tfrac12 a^2$, all of the corrector terms $\mG^k$ vanish, as they must, because in this case $\e^Z=\exp\left\{a Y - \tfrac12 a^2 \angl{Y}\right\}$ is the exponential martingale and $\Lambda=0$.  With $a=-b/2$, the $\mG$-recursion~\eqref{eq:Grec} becomes precisely the $\mF$-recursion, Equation (3.1) of~\cite{alos2020exponentiation} derived in the context of stochastic volatility modeling, whereas the case $b=0$ yields the $\mK$ (cumulant) recursion, Equations (3.4), (3.9) of~\cite{lacoin2022probabilistic}, derived in the context of renormalisation of the sine-Gordon model in quantum physics. 

In~\cite{friz2022forests}, a number of applications of the $\mK$ expansion are given, including a neat derivation of the moment generating function of the L\'evy area.  Other applications 
include 
the computation of likelihood functions in statistics,
the computation of correlation functions in statistical physics, and the
computation of amplitudes in quantum field theory. 

However, in the context of (rough) stochastic volatility modeling that originally gave rise to diamond expansions, the $\mG$-expansion and its special case, the $\mF$-expansion (with $b=0$), are the relevant ones.

\section{Model-free results under stochastic volatility} \label{sec:SV}

Consider a stochastic volatility model written in forward variance form.  Specifically, 
let~$S$ be a strictly positive continuous martingale (the stock price). Then 
$X := \log S / S_0$ 
is a continuous semimartingale with quadratic variation process $\angl{X}$, which is assumed 
continuously differentiable such as to have a well-defined 
 instantaneous variance, defined via
$$
    V_t\,\D t := \D \angl{X}_t \, ,
    $$ and then forward variance as the conditional expectation of future instantaneous variance
    \[
 \xi_t(T) = \eef{V_T} .
\]
Forward variances are tradable assets (unlike spot variance), 
constituting a family of martingales indexed by their individual time horizons $T$.

The fair strike of a (total) variance swap maturing at time $T$ is  given by 
\[
M_t(T)= \int_t^T\,\xi_t(u)\,\D u
\]
and the fair strike of a forward-starting variance swap, starting at time $T$ and maturing at time $T+\Delta$ by 
\[
\zeta_T(T) =  \int_T^{T+\Delta}\,\xi_T(u)\,\D u= \EE_T{\int_T^{T+\Delta}\,V_u\,\D u} = \EE_T \angl{X}_{T,T+\Delta}.
\]
Its price process $\zeta_t(T)  = \EE_t \angl{X}_{T,T+\Delta}$ for $t \in [0,T]$ defines a martingale $\zeta = \zeta(T)$. 

We now show how a straightforward application of Theorem~\ref{thm:mainintroG} gives a {\em model-free} expression for the moment generating function (mgf) of $\log S$, the variance swap, and the forward-starting variance swap. As for the practical interest, if
$S=S_0 \e^X$ represents the SPX index, and $\Delta$ is 30 days, we get the joint mgf of SPX, the variance swap and $VIX^2$.

\begin{corollary} \label{thm:TripleJointMGF} 
Under natural integrability conditions, for $a, b, c \in \RR$ sufficiently small,
\begin{equation} \label{generalCF}
\eef{\e^{a\,X_T+ b\,\angl{X}_{T} +c\,\zeta_T(T)}} 
=\exp\left\{a\, X_t + b\,\angl{X}_{t} +  c\,\zeta_t(T)+\sum_{k=2}^\infty\,\mG^k_t (T;a,b,c)  \right\} \;,
\end{equation}
with $0 \le t \le T$, where the $\mG^k$'s are given recursively by (\ref{eq:Grec}),  starting with 
\beq \label{generalCFmG2}
           \mG^2 = \left( \tfrac 12 a(a-1)  + b \right) X \dm X + a c \,X \dm \zeta + \tfrac{1}{2}c^2\,\zeta \dm \zeta \ .
\eeq
\end{corollary}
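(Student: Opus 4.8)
The plan is to deduce the corollary directly from the multivariate form of Theorem~\ref{thm:mainintroG}, applied to an appropriately chosen pair of martingales. The one step that genuinely requires care is that the log-price $X$ is \emph{not} a martingale: since $S=S_0\e^X$ is a strictly positive continuous martingale, It\^o's formula gives $\D S/S=\D X+\tfrac12\,\D\angl{X}$, so its stochastic logarithm $N:=\cL(S)$, characterised by $\D N=\D S/S$, is the local martingale part of $X$, with $N=X+\tfrac12\angl{X}$ and $\angl{N}=\angl{X}$; under the natural integrability conditions we may take $N$ to be a true martingale, $N_t=\eet{N_T}$. Recognising $N$ as the relevant martingale is the crux, because rewriting the exponent in terms of $N$ is exactly what shifts the quadratic coefficient. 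Concretely, using $aX_T=aN_T-\tfrac a2\angl{X}_T$ and $\angl{X}=\angl{N}$,
\[
   a\,X_T+b\,\angl{X}_T+c\,\zeta_T(T)=a\,N_T+\bigl(b-\tfrac a2\bigr)\angl{N}_T+c\,\zeta_T(T).
\]

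With the exponent in this form I would apply the multivariate Theorem~\ref{thm:mainintroG} to the pair $Y=(Y^1,Y^2):=(N,\zeta)$ on $[0,T]$ (recall $\zeta=\zeta(T)$ is a martingale), with coefficient vector $(a,c)$ and coefficient matrix $b$ whose only nonzero entry is $b_{11}=b-\tfrac a2$. The theorem then gives the right-hand side of~\eqref{generalCF} with linear part $a\,N_t+(b-\tfrac a2)\angl{N}_t+c\,\zeta_t(T)$ and forests $\sum_{k\ge2}\mG^k_t(T)$. Substituting back $N_t=X_t+\tfrac12\angl{X}_t$ and $\angl{N}_t=\angl{X}_t$ collapses the first two terms to $a\,X_t+b\,\angl{X}_t$, recovering precisely the linear exponent $a\,X_t+b\,\angl{X}_t+c\,\zeta_t(T)$ claimed in~\eqref{generalCF}.

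It then remains to identify the forests. Specialising the multivariate $\mG^2$ formula with the coefficients above and using commutativity of $\dm$ yields
\[
   \mG^2=\bigl(\tfrac12 a^2+b-\tfrac a2\bigr)(N\dm N)+a\,c\,(N\dm\zeta)+\tfrac12 c^2\,(\zeta\dm\zeta).
\]
Because the diamond product sees only local martingale parts, $N\dm N=X\dm X$ and $N\dm\zeta=X\dm\zeta$; combined with $\tfrac12 a^2-\tfrac a2=\tfrac12 a(a-1)$ this is exactly~\eqref{generalCFmG2}. For $k>2$ the recursion~\eqref{eq:Grec} carries over verbatim, the factor $a\,Y$ being read in its multivariate sense as $\sum_i a_iY^i=a\,N+c\,\zeta=a\,X+c\,\zeta$ inside every diamond, so the higher forests are generated exactly as stated, with a.s.\ convergence and the integrability requirements inherited from the hypotheses of Theorem~\ref{thm:mainintroG}.
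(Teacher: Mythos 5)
Your proposal is correct and follows essentially the same route as the paper's own proof: rewrite the exponent in terms of the martingale $X+\tfrac12\angl{X}$ (the paper's $Y$, your $N=\cL(S)$), apply the bivariate form of Theorem~\ref{thm:mainintroG} with coefficient vector $(a,c)$ and quadratic coefficient $b-\tfrac a2$ on the first component, and collect terms using the fact that the diamond product ignores bounded-variation parts. Your explicit remarks that $N$ is the stochastic logarithm of $S$ and that the recursion term $a\,Y\dm\mG^{k-1}$ must be read multivariately as $(a\,X+c\,\zeta)\dm\mG^{k-1}$ merely spell out details the paper leaves implicit.
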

\begin{proof} 
Re-express the exponent at terminal time $T$ in terms of the martingale $Y = X + \tfrac{1}{2} \angl{X} $,
$$
  a\,X_T+ b\,\angl{X}_{T}+c\,\zeta_T(T) = a\,Y_T+c\,\zeta_T(T) +\left(b-\tfrac12a\right) \angl{Y}_{T}\, . 
$$
This is the inner product of $(a, c) \in \R^2$ with the $2$-dimensional martingale $(Y,\zeta) = (Y,\zeta(T)$, plus $\mathrm{diag} ( b - a/2, 0)$ contracted against its $(2 \times 2)$-covariation matrix, evaluated at time $T$. The bivariate formulation of Theorem~\ref{thm:mainintroG} now gives the right-hand side exponent
$$   a\,Y_t+c\,\zeta_t(T) +\left(b-\tfrac12a\right) \angl{Y}_{t} = a\,X_t+ b\,\angl{X}_{t}+c\,\zeta_t(T)
$$
plus the $\mG$-series, started with 
$$
         \mG^2 = \tfrac 12 a^2 Y \dm Y + ac Y \dm \zeta + \tfrac 12 c^2 \zeta \dm \zeta + (b - a/2) Y \dm Y.
$$
It remains to collect terms, noting that $Y$ can be safely replaced by $X$, since they only differ by a bounded variation term invisible to the diamond product. 
\end{proof}

\begin{remark} Martingality of $S=S_0 \e^X$ is seen in (\ref{generalCF}) upon setting $b=c=0$ and $a=1$. This is perfectly reflected in our $\mG$ expansion, which in this case, and also when $a=0$, vanishes altogether, as seen directly from~\eqref{generalCFmG2}. In this sense, martingality and total probability constraints are preserved at arbitrary truncation of the $\mG$ expansion of~\cite{friz2022forests} and its $\mF$-predecessor in~\cite{alos2020exponentiation}. We note that this is not achieved by the (cumulant) recursion of~\cite{lacoin2022probabilistic}: applied in its bivariate form, with $X = (1, -1/2) \cdot (Y,\langle Y \rangle)$, the martingality constraint is only seen upon summing a non-trivial infinite sum, whose value turns out to be zero. (In~\cite{friz2022forests} this is explained by forest reordering.)
\end{remark} 

\noindent Let $X\equiv \X$ and $\zeta \equiv \Zdotred$ so that 
we can write the first term of the $\mG$-series in Corollary~\ref{thm:TripleJointMGF} as
\tkz
\[
\mG^2 =\left( \tfrac{1}{2} a\,(a-1)+ b\right) \,\XXd +a c\, \XZd + \tfrac12 c^2\,\ZZd.
\]
\tkz
We could define \tkz$\XXd =: \M$\tkz, meaning $(X \dm X)=M = \M$, resulting in trees with leaves of three different colors, in which case $X_t$ would represents the log-stock price and $M_t(T)$, the variance swap.
Then
\beas 
 \mG^2 =  \left( \tfrac{1}{2} a\,(a-1)+ b\right) \,\M +a c\, \XZd + \tfrac12 c^2\,\ZZd.
\eeas
In general, we can always identify subtrees in this way and assign them a new variable name (and leaf color).

Upon setting $b=c=0$ in Corollary~\ref{thm:TripleJointMGF}, we get the $\mF$-expansion of~\cite{friz2022forests}. Upon Wick rotation, replacing~$a$ by~$\ui a$, and shifting the integer index, we recover the $\tilde{\mF}$-expansion of~\cite{alos2020exponentiation}.

 \begin{corollary}\label{cor:Ftilde}
The conditional cumulant  generating 
function (CGF) is given by
\beq
\psi_t(T; a) = \log \eet{\e^{\ui \,a \,X_{t,T}}} =  - \frac{1}{2}a\,(a+\ui)\,M_t(T) + \sum_{k=1}^\infty\,\tilde {\mF}_k(a).
\label{eq:CGFmF}
\eeq
where the $\tmF_k$ satisfy the recursion
$
\tmF_0 = -\tfrac 1 2 a(a+\ui)\,M_t = -\tfrac 1 2 a(a+\ui)\,\M
$
and for $k>0$,
\beq
\tmF_k=\frac12\,\sum_{j=0}^{k-2}\,\left(\tmF_{k-2-j} \dm \tmF_j\right) +\ui \,a\, \left(X \dm \tmF_{k-1}\right).
\label{eq:tmFrecursion}
\eeq
\end{corollary}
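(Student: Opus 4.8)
The plan is to deduce Corollary~\ref{cor:Ftilde} from Corollary~\ref{thm:TripleJointMGF} by three operations: specializing $b=c=0$, performing the analytic continuation $a\mapsto\ui a$, and relabelling the summation index. First I would set $b=c=0$ in~\eqref{generalCF}, isolating the univariate moment generating function $\eef{\e^{aX_T}}=\exp\{aX_t+\sum_{k\ge2}\mG^k_t(T;a,0,0)\}$ whose leading term is $\mG^2=\tfrac12 a(a-1)\,X\dm X$ from~\eqref{generalCFmG2}. Since the CGF is taken of the increment $X_{t,T}=X_T-X_t$ and $X_t$ is $\cF_t$-measurable, I would write $\psi_t(T;a)=\log\eet{\e^{\ui a X_{t,T}}}=-\ui a X_t+\log\eet{\e^{\ui a X_T}}$; the linear term $\ui a X_t$ produced on the right-hand side of the (continued) moment generating formula is then exactly cancelled, leaving $\psi_t(T;a)=\sum_{k\ge2}\mG^k_t(T;\ui a,0,0)$.

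Next I would carry out the Wick rotation at the level of the forest coefficients. Replacing $a$ by $\ui a$ turns $\tfrac12 a(a-1)$ into $\tfrac12(\ui a)(\ui a-1)=-\tfrac12 a(a+\ui)$; identifying the subtree $X\dm X$ with the variance swap $M$ (as in the discussion following Corollary~\ref{thm:TripleJointMGF}) gives $\mG^2_t(T;\ui a,0,0)=-\tfrac12 a(a+\ui)\,M_t(T)$, which is precisely the stated $\tmF_0$ and the prefactor in~\eqref{eq:CGFmF}. The recursion~\eqref{eq:Grec} with $c=0$ and $Y$ replaced by $X$ (legitimate because they differ by a bounded-variation term invisible to $\dm$) reads, after rotation, $\mG^k=\tfrac12\sum_{j=2}^{k-2}\mG^{k-j}\dm\mG^j+\ui a\,(X\dm\mG^{k-1})$.

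The key bookkeeping step is the index shift $\tmF_m:=\mG^{m+2}_t(T;\ui a,0,0)$. With this identification the sum becomes $\sum_{k\ge2}\mG^k=\sum_{m\ge0}\tmF_m=\tmF_0+\sum_{m\ge1}\tmF_m$, matching the shape of~\eqref{eq:CGFmF}. To verify the recursion~\eqref{eq:tmFrecursion} I would evaluate the rotated $\mG$-recursion at level $m+2$: the convolution $\tfrac12\sum_{j=2}^{m}\mG^{m+2-j}\dm\mG^j$ becomes $\tfrac12\sum_{i=0}^{m-2}\tmF_{m-2-i}\dm\tmF_i$ upon writing $j=i+2$ and using $\mG^{i+2}=\tmF_i$, while the linear term $\ui a\,(X\dm\mG^{m+1})$ becomes $\ui a\,(X\dm\tmF_{m-1})$. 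This reproduces~\eqref{eq:tmFrecursion} termwise, including the empty-sum case $m=1$ which yields $\tmF_1=\ui a\,(X\dm\tmF_0)$.

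The hard part will not be this algebra, which is a mechanical relabelling, but justifying the Wick rotation itself: Theorem~\ref{thm:mainintroG} and Corollary~\ref{thm:TripleJointMGF} are stated for real $a$ near the origin, whereas here $a$ enters through the purely imaginary argument $\ui a$. I would close this gap by an analyticity argument: under the natural integrability conditions the map $z\mapsto\eet{\e^{zX_T}}$ is holomorphic on a complex neighbourhood of a real interval about $0$, and the forest series $\sum_k\mG^k_t(T;z,0,0)$ converges locally uniformly there; both sides of the moment generating identity are then holomorphic in $z$ and agree on the real interval, hence agree throughout, in particular at $z=\ui a$ for $a$ small enough. That the characteristic function $\eet{\e^{\ui a X_{t,T}}}$ is bounded by $1$ guarantees the left-hand side is well defined, and termwise convergence of the rotated series is inherited from the real case by the same domination.
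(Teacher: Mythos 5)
Your proposal is correct and is essentially the paper's own argument: the paper obtains Corollary~\ref{cor:Ftilde} in one line by setting $b=c=0$ in Corollary~\ref{thm:TripleJointMGF}, Wick rotating $a\mapsto \ui a$ (so that $\tfrac12 a(a-1)$ becomes $-\tfrac12 a(a+\ui)$), and shifting the forest index, which are exactly the three steps you carry out, and your bookkeeping $\tmF_m=\mG^{m+2}_t(T;\ui a,0,0)$ reproduces the recursion~\eqref{eq:tmFrecursion} termwise, including the empty-sum case. Your closing analyticity argument justifying the rotation is a reasonable extra beyond the paper, which simply performs the replacement under its standing ``natural integrability conditions.''
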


Having generated a model-free expression for the cumulant generating function for any stochastic volatility model written in forward variance form, we can compute model-free expressions for attainable claims, such as variance and gamma swaps. (See e.g.~\cite{lee2010gamma}.)

As is well-known, the fair value of the variance swap is given by the fair value of the log-strip:
\[
 \eef{X_{t,T}}= (-\ui)\,{\psi'_t}(T;0) =  - \tfrac{1}{2}\,M_t(T) 
\label{eq:vs}
\]
and the fair value of the gamma swap is given by the fair value of the entropy contract:
\[
\eef{X_{t,T} \,\e^{ X_{t,T}}} = -\ui\,{\psi'_t}(T;-\ui).
\]
From~\eqref{eq:CGFmF} and  the recursion~\eqref{eq:tmFrecursion}, it is easy to see that only trees containing a single 
$\X$
 leaf will survive in the sum after differentiation when $a = - \ui$ so that
\beas
\sum_{k=1}^\infty \,{\tilde{\mF}_k}'(-\ui) &=&\tfrac \ui 2\,\sum_{k=1}^\infty \,X^{\dm k} M\\
&=&\tfrac \ui 2\,\left\{ \MXd + \MXdXd+ \MXdXdXd +
...\right\}.
\eeas
Then the fair value of a gamma swap is given by
\beq
\cG_t(T) = 2\,\eef{X_{t,T} \,\e^{ X_{t,T}}} = \M + \MXd + \MXdXd+ \MXdXdXd + ...
\label{eq:gsAllOrders}
\eeq

\noindent We deduce that the fair value of a leverage swap is given by
\bea
\cL_t(T) &:=& \cG_t(T) - M_t(T) = \sum_{k=1}^\infty \,X^{\dm k} M \label{eq:leverage}\\
&=& \MXd + \MXdXd+ \MXdXdXd + ...\nonumber
\eea
The completely model-free and explicit expression~\eqref{eq:leverage} for the leverage swap is in terms of diamond products of products of the spot and volatility processes.  In particular, if spot and volatility processes are uncorrelated, $\MXd$ and higher order terms vanish, and we see that the fair value of the leverage swap is zero.

In~\cite{fukasawa2014volatility}, Fukasawa shows that
\beas
\eet{\int_t^T\,\D \angl{S,M(T)}_r} = S_t\left\{\cG_t(T) - M_t(T)\right\}.
\eeas
Thus, by definition of the diamond product and by the definition~\eqref{eq:leverage} of the leverage swap, 
\beq
 (S \dm M)_t(T)=S_t\,\cL_t(T).
\label{eq:levswap}
\eeq
In other words, the fair value of a leverage swap  is given by the quadratic covariation of the underlying and the variance swap,  {for  any admissible stochastic volatility model}.

Comparing equation~\eqref{eq:leverage} with equation~\eqref{eq:levswap},
we are led to the following identity, for which we offer a direct 
diamond proof. 

\begin{lemma}\label{lem:SdmM}

\[
(S \dm M)_t(T) = \left(\e^X \dm M\right)_t(T) = S_t\,\sum_{k=1}^\infty \,\left(X ^{k \dm} M \right)_t(T).
\]
\end{lemma}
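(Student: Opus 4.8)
The first equality is immediate and carries no analytic content: since $S=S_0\,\e^X$ and $\dm$ is bilinear and depends only on local martingale parts, $(S\dm M)_t=S_0\,(\e^X\dm M)_t$ while $S_t=S_0\,\e^{X_t}$, so the constant $S_0$ cancels and I may as well take $S_0=1$, i.e. $S=\e^X$. The real content is the second equality, which I would establish in two steps followed by a convergence discussion.

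\emph{Step 1: reduce the diamond to a single conditional expectation.} By It\^o's formula the local martingale part of $S=\e^X$ has stochastic differential $S\,\dd X^m$, where $X^m$ denotes the local martingale part of $X$; hence $\dd\angl{\e^X,M}=\e^X\,\dd\angl{X,M}$ and
\[
(\e^X\dm M)_t=\eef{\angl{\e^X,M}_{t,T}}=\eef{\int_t^T \e^{X_r}\,\dd\angl{X,M}_r}.
\]
Since $\angl{X,M}$ has finite variation and $S$ is a martingale, a single integration by parts (the cross bracket with the finite-variation integrator vanishes and the residual $\dd S$-integral has zero conditional mean) collapses this to
\[
(\e^X\dm M)_t=\eef{S_T\,\angl{X,M}_{t,T}}=\e^{X_t}\,\eef{\tfrac{S_T}{S_t}\,\angl{X,M}_{t,T}} .
\]
Because $S$ is a martingale, $S_T/S_t=\cEE{X^m}_{t,T}$ is, conditionally on $\cF_t$, a martingale in $T$ started at $1$. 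The lemma is thus equivalent to the model-free identity $\eef{(S_T/S_t)\,\angl{X,M}_{t,T}}=\sum_{k\ge1}\bigl(X^{k\dm}M\bigr)_t=\cL_t$.

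\emph{Step 2: expand the martingale density and match orders.} The crucial choice is to expand $S_T/S_t$ not in ordinary powers of $X$ but as the iterated-integral series of the stochastic exponential, $S_T/S_t=\sum_{n\ge0}\cI_n(t,T)$ with $\cI_n(t,r)=\int_{t\le u_1\le\cdots\le u_n\le r}\dd X^m_{u_1}\cdots\dd X^m_{u_n}$; this is what keeps the finite-variation drift $-\tfrac12\angl{X}$ of $X$ invisible and the matching exact. After interchanging the sum with $\eef{\cdot}$ it suffices to prove, for every martingale $Q$ and every $n\ge0$, that $\eef{\cI_n(t,T)\,\angl{X,Q}_{t,T}}=\bigl(X^{(n+1)\dm}Q\bigr)_t$. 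I would prove this by induction on $n$. The base case $n=0$ is the definition $X\dm Q=\eef{\angl{X,Q}_{t,T}}$. For the step, use $\dd\cI_n=\cI_{n-1}\,\dd X^m$, the martingale product rule $\eef{\cI_n(t,T)\,\angl{X,Q}_{t,T}}=\eef{\angl{\cI_n,\,\overline{\angl{X,Q}}}_{t,T}}$ (with $\overline{(\cdot)}$ the associated Doob martingale, whose terminal value is $\angl{X,Q}_T$), and the observation that this Doob martingale is precisely the martingale part of $X\dm Q$. Integrating by parts against the finite-variation, adapted process $\angl{X,X\dm Q}$ then collapses the level-$n$ object with inner martingale $Q$ to the level-$(n-1)$ object with inner martingale $X\dm Q$, i.e. $\eef{\cI_n(t,T)\angl{X,Q}_{t,T}}=\eef{\cI_{n-1}(t,T)\angl{X,X\dm Q}_{t,T}}$, which is the induction hypothesis applied to $X\dm Q$. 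Summing over $n$ recovers $\cL_t=\sum_k X^{k\dm}M$.

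\emph{Main obstacle.} Once the expansion is organised through the martingale part, the inductive step is clean, so the real difficulties are two. First, one must resist applying a stochastic Fubini theorem to interchange $\int\!\int \dd X^m\,\dd\angl{X,M}$: the increment $\angl{X,M}_{u,T}$ is $\cF_T$- rather than $\cF_u$-measurable, so the swapped integrand is anticipating and ``the mean of a martingale integral is zero'' fails; keeping every integrand adapted by integrating by parts, and verifying that no spurious $\dd\angl{X}$ drift survives, is the delicate point. Second, and more substantively, one must justify the interchange of the infinite series with the conditional expectation together with its $L^1$/a.s. convergence; this is exactly where the ``natural integrability conditions'' and the smallness hypotheses enter, in the same spirit as the convergence already assumed for the $\mG$-series of Theorem~\ref{thm:mainintroG}.
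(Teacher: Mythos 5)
Your proof is correct, and it reaches the identity by a recognizably different decomposition than the paper's. The paper keeps the exponential martingale $S$ intact: setting $U=(X\dm M)_\cdot(T)$, it applies It\^o's formula to the product $S\,U$, so that one application of the product rule (plus vanishing conditional means of martingale integrals) peels off the boundary term $S_t\,(X\dm M)_t(T)$ and leaves a remainder $\eef{\int_t^T S_u\,\D\angl{X,\,X\dm M}_u}$ of exactly the original shape with $M$ replaced by $X\dm M$; the series then arises by iterating, each term $S_t\,(X^{k\dm}M)_t(T)$ appearing as the boundary term of the $k$-th iteration, and convergence amounts to the remainder vanishing. You instead decompose $S_T/S_t$ first, into the iterated-integral expansion $\sum_{n\ge0}\cI_n(t,T)$ of the stochastic exponential of $X^m$, and establish the clean term-by-term identity $\eef{\cI_n(t,T)\,\angl{X,Q}_{t,T}}=\bigl(X^{(n+1)\dm}Q\bigr)_t(T)$ by induction on $n$. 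Your collapse step $\eef{\cI_n\,\angl{X,Q}_{t,T}}=\eef{\cI_{n-1}\,\angl{X,\,X\dm Q}_{t,T}}$ is, mechanically, the same Doob-martingale/integration-by-parts manipulation as the paper's It\^o step on $SU$ --- both peel off one diamond per pass via $\angl{X,\,X\dm Q}$ --- so the two proofs share their engine but differ in where the infinite sum comes from: in the paper it is generated by successive remainders, in yours it is imported from the exponential expansion, with each term then an exact finite identity. What your version buys is that expanding in $X^m$ rather than $X$ disposes of the drift $-\tfrac12\angl{X}$ once and for all, and that the analytic burden is isolated in the convergence of the $\cI_n$-series and the sum--expectation interchange, which you rightly flag and which the paper is equally terse about (``natural integrability conditions'', ``a recursive argument gives''). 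Two minor points you should make explicit: your induction hypothesis is stated for martingales $Q$ but is applied to the semimartingale $X\dm Q$ --- harmless precisely because brackets and diamonds see only local martingale parts, which is exactly your Doob-martingale observation, but it deserves a sentence; and the step $\eef{\cI_n(t,T)\,\angl{X,Q}_{t,T}}=\eef{\angl{\cI_n,\overline{\angl{X,Q}}}_{t,T}}$ uses $\cI_n(t,t)=0$, valid only for $n\ge1$, which is indeed the only range where you invoke it.
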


\begin{proof}
Note that
\beas
&&(S \dm M)_t(T) = \eef{ \int_t^T \D \angl{S,M}_u}=\,\eef{ \int_t^T S_u\,\D \angl{X,M}_u}.
\eeas
Now define $U_t:=\eef{ \int_t^T \,\D \angl{X,M}_u}= (X \dm M)_t(T)$. Applying It\^{o}'s Formula to the process $S U$  we get
\begin{eqnarray*}
&&\eef{ \int_t^T S_u\,\D \angl{X,M}_u}\nonumber\\
&&=S_t\,\eef{ \int_t^T \,\D \angl{X,M}_u}+\eef{ \int_t^T S_u\,\D \angl{X,U}_u}\nonumber\\
&&=S_t\,(X\dm M)_t(T)+\eef{ \int_t^T  S_u\,\D \angl{X,(X \dm M)}_u}.
\end{eqnarray*}
Then a recursive argument gives us that
\[
\eef{ \int_t^T \,\D \angl{S,M}_u}=S_t\,\sum_{k=1}^\infty \,\left(X ^{k \dm} M \right)_t(T).
\]
\end{proof}

What is this good for?  Since the leverage swap is an attainable claim, it can be expressed as a portfolio of European options whose value is given in principle by the market -- only in principle, because in practice interpolation and extrapolation are required, market prices being available only certain discrete strikes and expirations.  Since~\eqref{eq:leverage} gives the model value of the leverage swap for any given maturity $T$, and we can estimate the value of the leverage swap for each $T$ from the market, models can be efficiently calibrated.

However there is a catch.  This model calibration procedure works only when we know how to compute or easily approximate diamond trees. 

\subsection{The Bergomi-Guyon smile expansion}\label{sec:chap_Forest_Chapter_BG}

Following~\cite{bergomi2012stochastic}, consider a forward variance model of the form
\bea
\frac {\D S_t}{S_t} &=& \sqrt{V_t}\,\D Z_t\nonumber\\
\D\xi_t(u) &=& \lambda(t,u,\xi_t)\,\D W_t.
\label{eq:Forest_BGform}
\eea

\noindent To expand such a model 
 for small volatility of volatility, Bergomi and Guyon scale the volatility of volatility function $\lambda(\cdot)$ by a dimensionless parameter $\varepsilon$ so that $\lambda \mapsto \varepsilon\,\lambda$.  Setting $\varepsilon=1$ at the end then gives the required expansion. 

Let $k=\log K/S$ be the log-strike and $T$ be the option expiration. For ease of notation
 and without loss of generality, set $t=0$ and drop references to the initial time.
 The Bergomi-Guyon smile expansion (Equation (14) of~\cite{bergomi2012stochastic}) then reads
\beq
\sigma_{\BS}(k,T) = \hat \sigma_T + \cS_T\,k + \cC_T\,k^2 + \cO(\varepsilon^3)
\label{eq:BGsmile}
\eeq
where the coefficients $\hat \sigma_T$, $ \cS_T$ and $\cC_T$  are expressed in terms of what Bergomi and Guyon call autocorrelation functionals.  The beauty of the Bergomi-Guyon smile expansion is that it relates observed properties of the implied volatility surface such as at-the-money volatility $\hat \sigma_T$ and the at-the-money volatility skew $\cS_T$ directly to quantities that are computable from the formulation of the model, for any model written in forward variance form. 

On the other hand the $\tilde{\mF}$-expansion~\eqref{eq:CGFmF} for the cgf can be expressed 
 as a formal power series in (our own parameter) $\epsilon$ whose power counts the forest index $\ell$.  That is, from~\eqref{eq:CGFmF},
\beq
\psi(T; a) = \log \ee{\e^{\ui \,a \,X_{T}}} =  - \frac{1}{2}a\,(a+\ui)\,M(T) + \sum_{\ell=1}^\infty\,\epsilon^\ell\,\tilde {\mF}_\ell(a).
\label{eq:CGFmF2}
\eeq


It is well-known~\cite{carr1999option, lewis2000option} that the price of a European option may be computed by inverting the characteristic function of a given stochastic process. Equivalently,  
equ. (5.7) of~\cite{gatheral2006volatility} expresses the  the implied total variance smile $\Sigma(k,T):= \sigma_{\BS}^2(k,T)\,T$ in 
terms of the characteristic function:
\begin{equation}
 \int_0^\infty\frac{\D u}{u^2+\frac{1}{4}}
\,\mathrm{Re}\left[\e^{-iuk}\left(\e^{\psi(T; u-i/2)}-\e^{-\frac{1}{2}\left(u^2+\frac{1}{4}\right)\,\Sigma(k,T)}\right)\right]=0.\label{eq:ImpliedFromFourier}
\end{equation}

\noindent Dropping explicit references to $T$, let
\[
\Sigma(k) =  \sum_{\ell=0}^\infty\,\epsilon^\ell\, a_\ell(k).
\]
Then, substituting from~\eqref{eq:CGFmF2},~\eqref{eq:ImpliedFromFourier} becomes
\beas
\int_0^\infty\frac{\D u}{u^2+\frac{1}{4}}
\,\mathrm{Re}\left[\e^{-iuk}\left\{\e^{ - \frac{1}{2}\left(u^2+\frac14\right)\,M(T) + \sum_{\ell=1}^\infty\,\epsilon^\ell\,\tilde {\mF}_\ell(u-\ui/2)}-\e^{-\frac{1}{2}\left(u^2+\frac{1}{4}\right)\,\sum_{\ell=0}^\infty\,\epsilon^\ell\, a_\ell(k)}\right\}\right]=0.
\eeas
Setting $\epsilon = 0$ gives $a_0(k) = M(T)$, as expected.  Then matching powers of $\epsilon$ to first order gives
\begin{equation} \label{eq:a1}
a_1(k) 
= \left(\frac{k}{M} + \frac12\right)\,(X \dm M).
\end{equation} 
where we have put $M= M(T)$ for short.   In Appendix A of~\cite{alos2020exponentiation}, an explicit algorithm using Hermite and Bell polynomials that matches coefficients of $\epsilon$ to any desired order is given. This algorithm gives the second order coefficient as
\beas
a_2(k) 
 &=&\frac 14  (X\dm M)^2\,\left\{-\frac{5 k^2}{M^3}-\frac{2 k}{M ^2} +  \frac{3}{M^2}+\frac{1}{4 M} \right\}\nonumber\\
&&+\frac 14 (M \dm M)\,   \left\{\frac{k^2}{M ^2}-\frac{1}{M}-\frac{1}{4}  \right\}+
(X \dm (X \dm M))\,  \left\{  \frac{k^2}{M ^2}+\frac{k}{M} -\frac{1}{M} +\frac{1}{4}    \right\} .\nonumber\\
  \eeas
It is straightforward to verify that the resulting expansion $\Sigma(k)= M + \epsilon\,a_1(k) + 
 \epsilon^2\,a_2(k) + \cO(\epsilon^3)$ agrees with the Bergomi Guyon smile expansion to second order in $\epsilon$, up to the identification $\epsilon \rightsquigarrow \varepsilon$.   In this sense, the $\tilde{\mF}$-expansion allows us to extend the Bergomi-Guyon expansion to all orders.

\section{Affine forward variance models}

Diamond trees turn out to be particularly easy to compute in affine forward variance (AFV) models.
As shown in Chapter~\ref{chap:RoughAffine_Chapter}, affine forward variance (AFV) models must take the form
\bea
\frac{\D S_t}{S_t}&=&\sqrt{V_t}\,\D Z_t\nonumber\\
\D\xi_t(u)&=&  \kappa(u-t)\,\sqrt{V_t}\,\D W_t,\qquad t\leq u,
\label{eq:AFV}
\eea
with \ $\D \angl{W,Z}_t = \rho\,\D t$. 
AFV models are therefore essentially all Heston models with different choices of the kernel function $\kappa$.  For example, the classical Heston model corresponds to the choice 
\[
\kappa(\tau) = \nu\,e^{-\lambda\,\tau},
\]
and the rough Heston model of~\cite{euch2019characteristic} to the choice
\[
\kappa(\tau) = \nu\,\tau^{\alpha-1}\,E_{\alpha,\alpha}(-\lambda\,\tau^\alpha),
\]
with $\nu,\lambda >0$ and $\alpha = H +\tfrac 12 \in \left(\tfrac12,1\right]$. 
Integrating the forward variance equation gives
\begin{equation} \label{eq:AFV2}
      \D \zeta_t (T) = \left( \int_T^{T+\Delta}  \kappa(u-t) \,\D  u \right) \,\sqrt{V_t}\,\D W_t =: \bar{\kappa}(T-t) \,\sqrt{V_t}\,\D W_t
\end{equation}
which has the same form as (\ref{eq:AFV}).

\begin{lemma}\label{lem:AffineTrees}

In the affine forward variance model (\ref{eq:AFV}) all diamond trees (with leaves of two types $X = \X$ and $\zeta = \Zdotred$, respectively), and hence all forests terms
$\mG^k_t$ in (\ref{generalCF}) are of the form 
\beq
\int_t^T\,\xi_t(u)\,h(T-u)\,\D u
\label{eq:convolutionForm}
\eeq
for some integrable function $h$.
\end{lemma}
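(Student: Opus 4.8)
The plan is to show that the class $\cC$ of processes of the convolution form~\eqref{eq:convolutionForm} is stable under the diamond product, and then to run a structural induction on the diamond tree. The engine of the argument is that, in an AFV model, every building block has a local martingale part of the shape ``(deterministic loading depending only on $T-t$) $\times\sqrt{V_t}\,\D W_t$'' (with $\D Z_t$ for the bare leaf $X$); its covariation with any other such process is then a deterministic function of $T-t$ times $V_t\,\D t$, and applying $\eef{\cdot}$ replaces $V_u$ by $\xi_t(u)$, landing back in $\cC$.

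First I would record the martingale parts of the building blocks. The leaf $X$ has $\D X^{\mathrm{mart}}_t=\sqrt{V_t}\,\D Z_t$, i.e.\ loading $\equiv 1$; the leaf $\zeta=\zeta(T)$ has $\D\zeta^{\mathrm{mart}}_t=\bar{\kappa}(T-t)\sqrt{V_t}\,\D W_t$ by~\eqref{eq:AFV2}. The key computation is for a generic element $A_t(T)=\int_t^T\xi_t(u)\,f(T-u)\,\D u$ of $\cC$: differentiating in $t$ (stochastic Leibniz/Fubini rule), with $\D_t\xi_t(u)=\kappa(u-t)\sqrt{V_t}\,\D W_t$ and $\xi_t(t)=V_t$, produces a drift $-V_t\,f(T-t)\,\D t$ from the moving lower limit plus a martingale term whose loading is $\int_t^T\kappa(u-t)\,f(T-u)\,\D u$. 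The substitution $s=u-t$ rewrites this as $(\kappa\ast f)(T-t)$, so the loading again depends on $t,T$ only through $T-t$. This translation invariance is exactly where the AFV structure $\kappa(u-t)$ is used, and I expect it to be the one genuinely non-routine step.

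With this in hand the stability step is immediate: for two building blocks with loadings $\ell_A(T-t),\ell_B(T-t)$ driven by Brownian motions of mutual correlation $\rho_{AB}\in\{1,\rho\}$, one has $\D\angl{A,B}_t=\rho_{AB}\,\ell_A(T-t)\,\ell_B(T-t)\,V_t\,\D t$, whence
\[
(A\dm B)_t(T)=\eef{\int_t^T \rho_{AB}\,\ell_A(T-u)\,\ell_B(T-u)\,V_u\,\D u}=\int_t^T \xi_t(u)\,h(T-u)\,\D u
\]
with $h=\rho_{AB}\,\ell_A\,\ell_B$, so $A\dm B\in\cC$. In particular the depth-one trees $X\dm X$, $X\dm\zeta$, $\zeta\dm\zeta$ belong to $\cC$, with $h\equiv 1$, $h=\rho\,\bar{\kappa}$ and $h=\bar{\kappa}^2$ respectively, which furnishes the base case.

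Finally I would induct on tree depth: a tree $\mT_1\dm\mT_2$ has each $\mT_i$ either a leaf or a shorter tree already in $\cC$, hence with a loading depending only on $T-t$, so by the stability step $\mT_1\dm\mT_2\in\cC$. Since each $\mG^k_t$ is, by~\eqref{eq:Grec} and~\eqref{generalCFmG2}, a finite linear combination of such trees, and $\cC$ is a vector space (add the kernels $h$), every $\mG^k_t$ is of the form~\eqref{eq:convolutionForm}. The only remaining point is integrability of $h$: the loadings arise from $\kappa$ and $\bar{\kappa}$ by convolution with and multiplication by locally bounded functions, so under the standing integrability hypotheses (and for the Heston and rough-Heston kernels, where $\kappa\in L^1_{\mathrm{loc}}$) each $h$ is integrable on $[0,T]$, which I would confirm by a short induction tracking local boundedness of the loadings.
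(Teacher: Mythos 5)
Your proof is correct and follows essentially the same route as the paper's: verify the two-leaf base cases $X \dm X$, $X \dm \zeta$, $\zeta \dm \zeta$, then induct using the key fact that a process $\int_t^T \xi_t(u)\,f(T-u)\,\D u$ has martingale loading $(\kappa \star f)(T-t)$, so the diamond of two building blocks lands back in the convolution class with kernel $\rho_{AB}\,\ell_A\,\ell_B$, exactly as in the paper's induction step. If anything, your ``loading'' bookkeeping is slightly more complete than the paper's written argument, since it explicitly covers mixed products such as $X \dm (X \dm M)$ in which one factor is a bare leaf (and hence not itself of the form~\eqref{eq:convolutionForm}), a case the paper's induction, which assumes both factors are of the supposed form, handles only implicitly.
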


\begin{proof} As we computed earlier, $\M = \XXd = \int_t^T\,\xi_t(s)\,\D s$, and from~\eqref{eq:AFV2},
\beq
\XZd \ \ \  
=  \rho\,\int_t^T\,\xi_t(u)\,\bar \kappa(T-u)\,\D u \;, 
\qquad \ZZd \ \ \  
=  \int_t^T\,\xi_t(u)\,\bar \kappa(T-u)^2\,\D u \;.
\label{eq:AFV3}
\eeq
We thus see that the claim holds for all diamond trees with two leaves and proceed by induction.
Consider two trees 
\[
\mT^i_t =\int_t^T\,\xi_t(u)\,h^i(T-u)\,\D u, \qquad i = 1,2
\]
of the supposed form. 
Then
\beas
\left(\mT^1 \dm \mT^2 \right)_t(T) &=& \eef{\int_t^T\,\D \angl{\mT^1,\mT^2}_u}\\
&=&\eef{\int_t^T \int_u^T \int_u^T\,\,h^1(T-s) \,h^2(T-r) \, \,\D s \, \D r \,\D \angl{\xi(s),\xi(r)}_u} \\
&=&\eef{\int_t^T\,V_u \,\kappa(s-u)\,\kappa(r-u)\,\D u\,\int_u^T\,h^1(T-s)\,\D s\,\int_u^T\,\,h^2(T-r)\,\D r\,}\\
&=&\int_t^T\,\xi_t(u)\,h^{12} (T-u) \,\D u,
\eeas
and the induction step is completed upon setting 
\[
h^{12}
(T-u)  = \int_u^T\,h^1(T-s)\,\kappa(T-s)\,\D s\,\int_u^T\,h^2(T-r)\,\kappa(T-r)\,\D r.
\]\end{proof}

At this stage it is tempting to combine Lemma~\ref{lem:AffineTrees} with Theorem~\ref{thm:TripleJointMGF} to compute the triple-joint mgf of $X_T$, $\angl{X}_{t,T}$, and $\zeta_T(T)$ by summing the full $\mG$-expansion for an affine forward variance model.  Since each tree in the $\mG$-expansion has the form $\int_t^T\,\xi_t(u)\,h(T-u)\,\D u$, it follows that the mgf is necessarily of the convolutional form
\[
\log \eef{\e^{a\,X_T + b\,\angl{X}_{T}+ c\,\zeta_T(T)}} =a\,X_t + b\,\angl{X}_{t} + c\,\zeta_t(T)+ \int_t^T \xi_t(u) \,g(T-u;a,b,c, \Delta)\, du,
\]
which amounts to an infinite-dimensional version of the classical affine ansatz.  
Inserting
$ \Lambda_{t} (T) = \int_t^T \xi_t(u) \,g(T-u;a,b,c, \Delta) \,\D u $
directly into the ``abstract Riccati'' equation~\eqref{eq:Master},  we 
 readily obtain that the triple-joint MGF satisfies a convolution Riccati equation of the type considered in~\cite{abijaber2019affine, gatheral2019affine}.
We summarise this in the following theorem.
\begin{theorem}\label{thm:triplejointMGF}
Let
\beas
\D X_t &=& -\tfrac 12 \,V_t\,\D t + \sqrt{V_t}\,\D Z_t\\
\D\xi_t(T) &=& \kappa(T-t)\,\sqrt{V_t}\,\D W_t,
\eeas
with $\D\angl{W,Z}_t=\rho\,\D t$ and let $\angl{X}_{t,T} = \angl{X}_T-\angl{X}_t$.  Further let $\tau = T-t$,
$
\bar \kappa(\tau) = \int_\tau^{\tau+\Delta}\,\kappa(u)\,\D u,
$
and define the convolution integral
$$
(\kappa \star g) (\tau) = \int_0^\tau\,\kappa(\tau - s)\,g(s)\,\D s.
$$
Then
\[
\eef{\e^{a\,X_T + b\,\angl{X}_{t,T}+ c\,\zeta_T(T)}} =\exp\left\{a\,X_t +c\,\zeta_t(T)+ (\xi \star g) (T-t;a,b,c,\Delta)\right\}
\]
where $g(\tau;a,b,c,\Delta)$ satisfies the convolution Riccati integral equation
\beq
g(\tau;a,b,c,\Delta)=b -\tfrac12 a + \tfrac12\,(1-\rho^2)\,a^2+ \tfrac12\,\left[\rho a + c \, \bar \kappa(\tau) + (\kappa \star g)(\tau;a,b,c,\Delta)\right]^2,
\label{eq:jointRiccati}
\eeq
with the boundary condition
$
g(0;a,b,c,\Delta)= b+\tfrac 12\,a(a-1) +\rho a c\,\bar \kappa (0)+ \tfrac12\,c^2\,\bar \kappa(0)^2
$.

\end{theorem}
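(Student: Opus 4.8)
The plan is to specialize the ``abstract Riccati'' equation of the Remark following Theorem~\ref{thm:mainintroG} to the affine forward variance dynamics and to exploit the convolutional structure already isolated in Lemma~\ref{lem:AffineTrees}. First I would record that, because $\angl{X}_{t,T}=\angl{X}_T-\angl{X}_t$ and $\angl{X}_t$ is $\cF_t$-measurable, the prefactor $\e^{-b\,\angl{X}_t}$ cancels the $b\,\angl{X}_t$ produced by Corollary~\ref{thm:TripleJointMGF}, so that
\[
\eef{\e^{a\,X_T+b\,\angl{X}_{t,T}+c\,\zeta_T(T)}}=\exp\left\{a\,X_t+c\,\zeta_t(T)+\Lambda_t^T\right\},
\]
where $\Lambda^T$ is the martingale correction attached to $Z_s:=a\,X_s+b\,\angl{X}_s+c\,\zeta_s(T)$ and solves~\eqref{eq:Master}. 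Combining Corollary~\ref{thm:TripleJointMGF} with Lemma~\ref{lem:AffineTrees} shows that every forest $\mG^k$, and hence $\Lambda^T=\sum_{k\ge2}\mG^k$, is of the convolutional form $\int_t^T\xi_t(u)\,g(T-u)\,\D u=:(\xi\star g)(T-t)$; the task is then to identify the equation solved by the single kernel $g$.

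Next I would substitute this ansatz into~\eqref{eq:Master} and compute the two contributions separately. The drift term is elementary: since $\zeta$ is a martingale and $X$, $\angl{X}$ carry drifts $-\tfrac12 V$ and $+V$, one gets $\eet{Z_{t,T}}=(b-\tfrac12 a)\int_t^T\xi_t(u)\,\D u$. For the diamond term I would extract the local-martingale part of $Z+\Lambda^T$. By~\eqref{eq:AFV}--\eqref{eq:AFV2} the parts coming from $a\,X$ and $c\,\zeta$ are $a\sqrt{V_t}\,\D Z_t$ and $c\,\bar\kappa(T-t)\sqrt{V_t}\,\D W_t$, while a moving-boundary Leibniz differentiation of $(\xi\star g)(T-t)$ produces the martingale part $(\kappa\star g)(T-t)\sqrt{V_t}\,\D W_t$ (the lower-limit term $-V_t\,g(T-t)\,\D t$ is of bounded variation and invisible to the diamond). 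Writing $Z=\rho\,W+\sqrt{1-\rho^2}\,W^\perp$ and collecting the $W$- and $W^\perp$-coefficients gives the instantaneous variance rate
\[
V_t\left\{(1-\rho^2)\,a^2+\left[\rho\,a+c\,\bar\kappa(T-t)+(\kappa\star g)(T-t)\right]^2\right\},
\]
whose deterministic bracket lets me pull $\eet{V_u}=\xi_t(u)$ out of $\tfrac12\left((Z+\Lambda^T)\dm(Z+\Lambda^T)\right)_t(T)=\tfrac12\eet{\int_t^T\D\angl{Z+\Lambda^T}_u}$.

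Assembling the two contributions, \eqref{eq:Master} becomes an identity between two integrals of $\xi_t(u)$ against deterministic kernels, valid for every $t\le T$. Matching the integrands (with $\tau=T-u$) then yields exactly the convolution Riccati equation~\eqref{eq:jointRiccati}, and evaluating it at $\tau=0$, where $(\kappa\star g)(0)=0$, reproduces the stated boundary condition; this is equally seen to be the $\mG^2$-contribution~\eqref{generalCFmG2} read through~\eqref{eq:AFV3}. The step I expect to be delicate is the passage from the integral identity to the pointwise equation for $g$: it requires that the family $\{\xi_t(\cdot)\}$ be rich enough to separate kernels, which I would justify by the freedom to prescribe an arbitrary initial forward-variance curve in an AFV model (or by differentiating in $t$). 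For full rigor I would in fact reverse the logic---first solve~\eqref{eq:jointRiccati} for $g$ on $[0,T]$, define $\Lambda^T$ through the ansatz, verify directly that it satisfies~\eqref{eq:Master}, and then invoke uniqueness of the martingale correction together with the convergence of the $\mG$-series guaranteed by the ``natural integrability conditions'' to conclude.
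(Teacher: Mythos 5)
Your proposal is correct and follows essentially the same route as the paper: plug the convolutional ansatz $\Lambda_t=\int_t^T\xi_t(u)\,g(T-u)\,\D u$ into the abstract Riccati equation~\eqref{eq:Master}, extract the martingale part of $\Lambda$ by Leibniz differentiation (giving $(\kappa\star g)(T-t)\sqrt{V_t}\,\D W_t$, the lower-limit drift being invisible to the diamond), compute the covariations, and match kernels in $\tau=T-u$ to obtain~\eqref{eq:jointRiccati}, with the boundary condition from $(\kappa\star g)(0)=0$. Your only departures are cosmetic refinements: you arrive at the completed-square form directly via the decomposition $Z=\rho\,W+\sqrt{1-\rho^2}\,W^\perp$ where the paper tabulates the six covariations and rearranges $\tfrac12 a(a-1)+b+\rho a\,(\kappa\star g)+\rho a c\,\bar\kappa+\tfrac12[c\,\bar\kappa+(\kappa\star g)]^2$ afterwards, and you explicitly flag (and patch, via the solve-then-verify-plus-uniqueness argument) the integrand-matching step that the paper performs silently.
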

\begin{proof}
From~\eqref{eq:AFV2},
$
d\zeta_t(T) 
= \sqrt{V_t}\,\D W_t\,{\bar \kappa}(T-t)
$.
As before, let 
$ \Lambda_{t}  = \int_t^T \xi_t(u) g(T-u;a,b,c, \Delta) du $.  
Then dropping the arguments $a,b,c,\Delta$ for ease of notation,
\beas
d\Lambda_t&=& -\xi_t(t)\, g(T-t)\,\D t + \int_t^{T}\,\D \xi_t(s)\, g(T-s)\,\D s\\
&=& -V_t\, g(T-t)\,\D t + \sqrt{V_t}\,\D W_t\,\int_t^{T}\,\kappa(s-t)\, g(T-s)\,\D s\\
&=& -V_t\, g(T-t)\,\D t + \sqrt{V_t}\,\D W_t\,(\kappa \star g)(T-t).
\eeas
We compute
\beas
\D \angl{X}_t
&=& V_t\,\D t\\
\D\angl{X,\zeta}_t &=& \rho\,V_t\, \bar \kappa(T-t)\,\D t\\
\D\angl{\zeta}_t &=&  V_t\,\bar \kappa(T-t)^2\,\D t\\
\D\angl{X,\Lambda}_t &=&  \rho\,V_t\,(\kappa\star g)(T-t)\, dt \\
\D\angl{\Lambda}_t &=& V_t\, \left[ (\kappa\star g)(T-t) \right]^2\,\D t\\
\D\angl{\zeta,\Lambda}_t &=& V_t\, \bar \kappa(T-t)\,(\kappa\star g)(T-t)\,\D t.
\eeas
Integrating these terms from $t$ to $T$, then taking a time-$t$ conditional expectation allows us to compute all diamond products in the ``abstract Riccati'' equation~\eqref{eq:Master}
\[
\Lambda  = \left(  \tfrac 12 a(a-1) + b\right) \XXd 
+ a c\,\XZd + a\,\X \dm \Lambda +  \tfrac{1}{2} \,\left[c\,\Zdotred +  \Lambda \right]^{\dm 2}
\]
to yield
\[
g(\tau)=\left(  \tfrac 12 a(a-1) + b\right) + \rho a c\,\bar \kappa(\tau) +a\,\rho\, (\kappa \star g)(\tau)
+\tfrac12\,
\left[ c \, \bar \kappa(\tau) + (\kappa \star g)(\tau)\right]^2,
\]
which upon rearrangement gives~\eqref{eq:jointRiccati}.
Finally, $(\kappa \star g)(0) =0$ gives the boundary condition
\beas
g(0) = b+\tfrac 12\,a(a-1) +\rho a c\,\bar \kappa (0)+ \tfrac12\,c^2\,\bar \kappa(0)^2.
\eeas

\end{proof}

\section{Explicit computations under rough Heston}

Explicit computations are easiest in the rough Heston model with $\lambda=0$.
In this case, with $\alpha  = H + 1/2 \in \left( 1/2, 1\right]$, 
\beas
\D X_t &=& -\tfrac12 V_t\,\D t +\sqrt{V_t}\,\D Z_t\\
\D\xi_t(u) &=& \frac{\nu}{\Gamma(\alpha)}\,(u-t)^{\alpha-1}\,\sqrt{V_t}\,\D W_t,
\eeas
with $\D\angl{W,Z} = \rho\,\D t$.  Starting with the simplest tree, the fair strike of the (total) variance swap is given by
\beas
M_t(T) = \M = \XXd = (X \dm X)_t(T) &=&\int_t^T\,\xi_t(u)\,\D u,
\eeas
as we noted earlier.  Thus, abbreviating bounded variation terms as `BV', we have
\beas
\D X_t &=& \sqrt{V_t}\,\D Z_t + \mathrm{BV}\\
\D M_t &=& \int_t^T\,\D \xi_t(u)\,\D u + \mathrm{BV}\\
& =& \frac{\nu}{\Gamma(\alpha)} \,\sqrt{V_t} \,\left(\int_t^T\,\frac{\D u}{(u-t)^\gamma}\right)\,\D W_t+ \mathrm{BV}\\
& =& \frac{\nu\,(T-t)^\alpha}{\Gamma(1+\alpha)} \,\sqrt{V_t} \,\D W_t+ \mathrm{BV}.
\eeas

\noindent We now proceed to compute the first few $\tilde{\mF}$ forests explicitly.  There is only one tree in the forest $\tmF_1$.
\beas
\XMd=(X \dm M)_t(T) &=& \eef{\int_t^T \D\angl{X,M}_s}\nonumber\\
&=& \frac{\rho\,\nu}{\Gamma(1+\alpha)} \,\eef{\int_t^T \,v_s \,(T-s)^\alpha\,\D s}\nonumber\\
&=& \frac{\rho\,\nu}{\Gamma(1+\alpha)} \,\int_t^T \,\xi_t(s) \,(T-s)^\alpha\,\D s.
\label{eq:rHcXM}
\eeas
In order to continue to higher orders, it makes sense to define for $j \geq 0$
\[
I^{(j)}_t(T):=\int_t^T\,\D s\,\xi_t(s)\,(T-s)^{j\,\alpha}.
\label{eq:Ij}
\]
Then
\beas
dI^{(j)}_s(T) &=& \int_s^T\,\D u\,\D \xi_s(u)\,(T-u)^{j\,\alpha} +{\mathrm{BV}}\nonumber\\
&=&\frac{\nu\,\sqrt{v_s}}{\Gamma(\alpha)}\,\D W_s\, \int_s^T\, \frac{(T-u)^{j\,\alpha}}{(u-s)^\gamma}\,\D u+{\mathrm{BV}} \nonumber\\
&=&\frac{ \Gamma (1+j \,\alpha)}{\Gamma (1+ (j+1)\, \alpha )}\,\nu\,\sqrt{v_s}\,(T-s)^{(j+1)\,\alpha}\,\D W_s+{\mathrm{BV}}.
\label{eq:dIj}
\eeas
With this notation, 
\[
(X \dm M)_t(T) = \frac{\rho\,\nu}{\Gamma(1+\alpha)} \,I^{(1)}_t(T).
\]

\noindent There are two trees in $\tmF_2$:
\beas
\MMd=(M \dm M)_t(T) &=& \eef{\int_t^T \D\angl{M,M}_s}\nonumber\\
&=& \frac{\nu^2}{\Gamma(1+\alpha)^2} \,\int_t^T \,\xi_t(s) \,(T-s)^{2\,\alpha}\,\D s\nonumber\\
&=& \frac{\nu^2}{\Gamma(1+\alpha)^2} \,I^{(2)}_t(T)
\label{eq:rHcMM}
\eeas
and
\beas
\MXdXd=\left(X \dm (X \dm M)\right)_t (T) &=& \frac{\rho\,\nu}{\Gamma(1+\alpha)} \, \eef{\int_t^T \D\angl{X,I^{(1)}}_s}\nonumber\\
&=& \frac{\rho^2\,\nu^2}{\Gamma(1+2\,\alpha)} \,I^{(2)}_t(T).
\label{eq:XXM}
\eeas
Note in passing that $\MMd$ gives the variance of the variance swap.
Continuing to the forest $\tmF_3$, we have the following.

\beas
\XMdMd=\left(M \dm (X \dm M)\right)_t (T) &=& { \frac{\rho\,\nu^3}{\Gamma(1+\alpha)\,\Gamma(1+2\,\alpha)} \,I^{(3)}_t (T)}\\
\MXdXdXd=\left(X \dm \left(X \dm (X \dm M)\right)\right)_t  (T) &=& \frac{\rho^3\,\nu^3}{\Gamma(1+3\,\alpha)} \,I^{(3)}_t (T)\\
\MMdXd=\left(X \dm (M \dm M)\right)_t  (T) &=& \frac{\rho\,\nu^3\,\Gamma(1+2\,\alpha)}{\Gamma(1+\alpha)^2\,\Gamma(1+3\,\alpha)} \,I^{(3)}_t (T).
\eeas

\noindent In particular, we readily identify the pattern
\beq\label{eq:levk}
\left(X ^{\dm k} M\right)_t (T)=  \frac{(\rho\,\nu)^k}{\Gamma (1+ k \,\alpha )}\,I^{(k)}_t(T),
\eeq
which gives us a simple closed-form expression for a tree with $k$ $\A$ leaves and only one $\M$ leaf.

\subsection{Time scaling of the $\tilde{\mF}$-expansion}

By inspection, fixing model parameters, we see that $\tilde{\mF}_k $ scales as
\[
I^{(k)}_t(T) =\int_t^T\,\D s\,\xi_t(s)\,(T-s)^{k\,\alpha} \sim (T-t) ^{k \alpha +1} \quad \text{ as } T \downarrow t.
\]
Thus, the $\tmF$-expansion of Corollary~\ref{cor:Ftilde}
has the interpretation of a short-time expansion, the concrete powers of which depend on the roughness parameter $\alpha = H + 1/2 \in (1/2,1)$. The resulting diamond expansions (which can obtained by alternative methods in the rough Heston case)  have been applied to construct efficient numerical schemes for the computation of the rough Heston characteristic function~\cite{callegaro2021fast, gatheral2019rational}.

\subsection{The leverage swap}

Recall that Equation~\eqref{eq:leverage} gives a model-free expression for the fair value of the leverage swap in terms of trees with only one $\M$ leaf.   Substituting from~\eqref{eq:levk}, we obtain
\beas
\cL_t(T) &=&  \sum_{k=1}^\infty \,\left(X ^{\dm k} M\right)_t(T)\nonumber\\
&=& \sum_{k=1}^\infty \,\frac{(\rho\,\nu)^k}{\Gamma (1+ k \,\alpha )}\,\int_t^T\,\D u\,\xi_t(u)\,(T-u)^{k\,\alpha}\nonumber\\
&=&\int_t^T\,\D u\,\xi_t(u)\,\left\{E_{\alpha }(\rho\,\nu\,(T-u)^\alpha)-1\right\}
\label{eq:rHestonLeverage}
\eeas
where $E_\alpha(\cdot)$ denotes the Mittag-Leffler function.  We have thus obtained an explicit expression for the leverage swap under rough Heston with $\lambda=0$.  As discussed earlier in Section~\ref{sec:SV}, since we can impute the leverage swap $\cL_t(t)$ from the smile for each expiration $T$, fast calibration of the rough Heston model is then possible.

\subsection{The rough Heston implied volatility skew}\label{sec:chap_Forest_Chapter_Heston}

As in Section~\ref{sec:chap_Forest_Chapter_BG}, fix $t=0$ and define $\Sigma(k,T)= \sigma_{\BS}^2(k,T)\,T$.  Then, from~\eqref{eq:a1}, in a general stochastic volatility model, the short  dated $(T \downarrow 0)$ total variance skew is given by
 \beas
\left. \p_k \Sigma(k,T) \right|_{k=0} = \frac{X \dm M}{M}.
 \eeas 
\noindent Recall that under rough Heston, $M_t(T) = \M = \XXd = (X \dm X)_t(T) = \int_t^T\,\xi_t(u)\,\D u$ and, 
$$
\XMd=(X \dm M)_t(T) = \frac{\rho\,\nu}{\Gamma(H + 3/2)} \,\int_t^T \,\xi_t(s) \,(T-s)^{H + 1/2}\,\D s.
$$
Setting $t=0$ in these expressions, with $\xi_0(s) \to \xi_0(0) = V_0$, time-$0$ spot variance as $s \downarrow 0$, an easy computation gives
$$
   \frac{\partial}{\partial k} \bigg|_{k=0} \sigma_{\mathrm{BS}}^2(k,T)  \sim  \frac{\rho\,\nu}{(H+3/2)\, \Gamma(H + 3/2)} T^{H-1/2} = \frac{\rho\,\nu}{\Gamma(H + 5/2)} T^{H-1/2}.
$$
As a sanity check, with $H=1/2$, we recover the classical Heston model with no mean reversion, and  noting that $\Gamma(3)=2!=2$, the well-known short-dated Heston implied variance skew formula, cf. \cite[p.35]{gatheral2006volatility}. For $H<1/2$ we see skew explosion of order $T^{H-1/2}$, as expected.

\section{Explicit computations under rough Bergomi}

The skeptical reader might wonder whether explicit diamond tree computations are only possible in AFV models.  We now show that explicit tree computations are also possible in the rough Bergomi model.

From~\cite{bayer2016pricing},
the rough Bergomi model may be written as
\beas
\frac{\D S_t}{S_t} &=& \sqrt{V_t}\,\left\{  \rho\,\D W_t + \sqrt{1-\rho^2}\,\D W^\perp_t \right\}\\
V_u &=& \xi_t(u) \,\cE\left(\eta\,\sqrt{2\,H}\, \int_t^u\,\frac{\D W_s}{(u-s)^\gamma} \right)
\eeas
with $\gamma = \frac 12 - H$.
In forward variance form
\[
\frac{\D_t \xi_t(u)}{\xi_t(u)} = \tilde \eta\,\frac{\D W_t}{(u-t)^\gamma} 
\]
with $\tilde \eta := \eta\,\sqrt{2\,H}$.
Then, again abbreviating bounded variation terms as `BV', we have
\beas
\D X_t &=& \sqrt{V_t}\,\D Z_t + \mathrm{BV}\\
\D M_t 
& =& \tilde \eta\,\left(\int_t^T\,\xi_t(u)\,\frac{\D u}{(u-t)^{\gamma}}\right)\,\D W_t +  \mathrm{BV}.
\eeas
\noindent  As in the rough Heston case, we now proceed to compute the first few $\tilde{\mF}$ forests explicitly.  
\subsection{The first order forest $\tmF_1$}
There is only one tree in the forest $\tmF_1$.
\beas
\MXd=(X \dm M)_t(T) &=& \eef{\int_t^T \D\angl{X,M}_s}\nonumber\\
&=& \rho\,\tilde \eta\,\eef{\int_t^T \,\D s\,\sqrt{v_s}\,\int_s^T\,\xi_s(u)\,\frac{\D u}{(u-s)^{\gamma}} }\nonumber\\
&=& \rho\,\tilde \eta\,\int_t^T \,\D s\,\int_s^T\,\eef{\sqrt{\xi_s(s)}\,\xi_s(u)}\,\frac{\D u}{(u-s)^{\gamma}} .
\label{eq:rBcXM}
\eeas
Applying Lemma~\ref{lem:prodSE}, we have
\beas
&&\eef{\sqrt{\xi_s(s)}\,\xi_s(u)}\\ &=& \sqrt{\xi_t(s)}\,\xi_t(u)\,\exp\left\{ \frac{\eta^2}2\,(s-t)^{2\,H}\,G_\gamma\,\left( \frac{u-t}{s-t}\right) \right\}
\, \exp\left\{-\frac{\eta^2}{8}\,(s-t)^{2\,H}\right\}.
\eeas
with $G_\gamma(\cdot)$ as defined in Appendix~\ref{sec:prodXiRoughBergomi}.  Thus
\bea
&&\MXd = (X \dm M)_t(T) \nonumber\\
&=& \rho\,\tilde \eta\,\int_t^T\,\D s\,\sqrt{\xi_t(s)}\,\int_s^T\,\frac{\D u}{(u-s)^\gamma}\,\xi_t(u)\,\exp\left\{ \frac{ {\eta}^2}2\,(s-t)^{2\,H}\,\left[ G_\gamma\left(\frac{u-t}{s-t}\right) - \frac 1{4} \right]\right\}.\nonumber\\
\label{eq:XdmM}
\eea

\subsection{The second order forest $\tmF_2$}\label{sec:F2}

\noindent There are two trees in $\tmF_2$, $\MMd=(M \dm M)$ and $\MXdXd=\left(X \dm (X \dm M)\right)$.  
First we compute
\beas
\MMd= (M \dm M)_t(T) &=& \eef{\int_t^T \D\angl{M,M}_s}\nonumber\\
&=&{\tilde \eta}^2\,\eef{\int_t^T \,\D s\,\int_s^T\,\xi_s(r)\,\frac{\D r}{(r-s)^{\gamma}}\,\int_s^T\,\xi_s(u)\,\frac{\D u}{(u-s)^{\gamma}}}\\
&=&2\,{\tilde \eta}^2\,\eef{\int_t^T \,\D s\,\int_s^T\,\xi_s(r)\,\frac{\D r}{(r-s)^{\gamma}}\,\int_r^T\,\xi_s(u)\,\frac{\D u}{(u-s)^{\gamma}}}.
\eeas
Another application of Lemma~\ref{lem:prodSE} gives, for $u \geq r$,
\beas
&&\eef{\xi_s(r)\,\xi_s(u)}\\ 
&=& \xi_t(r)\,\xi_t(u)\,\exp\left\{ \eta^2\,(s-t)^{2\,H}\,G_\gamma\,\left( \frac{u-t}{s-t},\frac{r-t}{s-t}\right) \right\}.
\eeas
Thus
\bea
&&\MMd= (M \dm M)_t(T)\nonumber\\ 
&=& 2\,{\tilde \eta}^2\,\int_t^T \,\D s\,\int_s^T\,\xi_t(r)\,\frac{\D r}{(r-s)^{\gamma}}\,\int_r^T\,\xi_t(u)\,\frac{\D u}{(u-s)^{\gamma}}
\,\exp\left\{ {\eta^2}\,(s-t)^{2\,H}\,G_\gamma\,\left( \frac{u-t}{s-t},\frac{r-t}{s-t}\right) \right\}.\nonumber\\
\label{eq:MdmM}
\eea
Reflecting the fact that $\MMd$ is the variance of $\M$, this integral may be simplified as 
\bea
&&\MMd = (M \dm M)_t(T)\nonumber\\ 
&=&2\, \int_t^T\,\xi_t(u)\,\D u\,\int_t^u\,\xi_t(r)\,\D r\,\left[ 
\exp\left\{  \eta^2\,(r-t)^{2 H}\,G_\gamma\left(\frac{u-t}{r-t}\right) \right\}-1 \right].
\label{eq:cMM}
\eea

\noindent Next we compute
\[
\MXdXd=\left(X \dm (X \dm M)\right)_t (T) =  \eef{\int_t^T \D\angl{X,(X\dm M)}_s}.
\]
To compute this, we need that
\[
\D (X \dm M)_s = \rho\,\tilde \eta\,\left\{ \D I_s+ \D J_s\right\}
\]
where 
\beas
\D I_s &=& \frac12\, \int_s^T\,\D r\,\frac{d \xi_s(r)}{\sqrt{\xi_s(r)}}\,\int_r^T\,\frac{\D u}{(u-r)^\gamma}\,\xi_s(u)\,\exp\left\{ \frac{ {\eta}^2}2\,(r-s)^{2\,H}\,\left[ G_\gamma\left(\frac{u-s}{r-s}\right) - \frac 1{4} \right]\right\}\\
&=& \D W_s\, \frac{ \tilde \eta}2\, \int_s^T\,\frac{\D r}{(r-s)^\gamma}\,\sqrt{\xi_s(r)}\,\int_r^T\,\frac{\D u}{(u-r)^\gamma}\,\xi_s(u)\,\exp\left\{ \frac{ {\eta}^2}2\,(r-s)^{2\,H}\,\left[ G_\gamma\left(\frac{u-s}{r-s}\right) - \frac 1{4} \right]\right\}
\eeas
and
\beas
\D J_s &=&  \int_s^T\,\D r\,\sqrt{\xi_s(r)}\,\int_r^T\,\frac{\D u}{(u-r)^\gamma}\,\D \xi_s(u)\,\exp\left\{ \frac{ {\eta}^2}2\,(r-s)^{2\,H}\,\left[ G_\gamma\left(\frac{u-s}{r-s}\right) - \frac 1{4} \right]\right\}\\
&=& \D W_s\, \tilde \eta\, \int_s^T\,\D r\,\sqrt{\xi_s(r)}\,\int_r^T\,\xi_s(u)\,\frac{\D u}{(u-s)^\gamma\,(u-r)^\gamma}\,\exp\left\{ \frac{ {\eta}^2}2\,(r-s)^{2\,H}\,\left[ G_\gamma\left(\frac{u-s}{r-s}\right) - \frac 1{4} \right]\right\}.
\eeas

\noindent Also from Lemma~\ref{lem:prodSE} again,
\beas
&&\eef{\sqrt{\xi_s(s)}\,\sqrt{\xi_s(r)}\,\xi_s(u)}\\
&=&\sqrt{\xi_t(s)}\,\sqrt{\xi_t(r)}\,\xi_t(u) \,\\
&& \qquad \qquad \times \exp\left\{ \eta^2\,(s-t)^{2\,H}\, \left[ \frac14 G_\gamma\left(\frac{r-t}{s-t}\right)+ \frac12 G_\gamma\left(\frac{u-t}{s-t}\right)
+ \frac12 G_\gamma\left(\frac{u-t}{s-t},\frac{r-t}{s-t}\right)\right]\right\}  \\
&& \qquad \qquad \qquad \qquad \times \exp\left\{ -\frac18\,\eta^2\,\left[ (s-t)^{2\,H} +  (r-t)^{2\,H} -  (r-s)^{2\,H}  \right]\right\}.
\eeas
Define
\beas
&&\eta^2\,F(s-t,r-t,u-t)\\ &=& \eta^2\,(s-t)^{2\,H}\, \left[ \frac14 G_\gamma\left(\frac{r-t}{s-t}\right)+ \frac12 G_\gamma\left(\frac{u-t}{s-t}\right)
+ \frac12 G_\gamma\left(\frac{u-t}{s-t},\frac{r-t}{s-t}\right)\right]\\
&&\qquad\qquad\qquad\qquad+ \frac{\eta^2}{2}\,(r-s)^{2\,H}\, G_\gamma\left(\frac{u-s}{r-s}\right) -\frac18\,\eta^2\,\left[ (s-t)^{2\,H} +  (r-t)^{2\,H}  \right].
\eeas
Then
\[
(X\dm (X \dm M))_t(T) = (\rho\,\tilde \eta)^2\,\left\{\frac12\, I_t + J_t\right\}
\]
with
\bea
I_t
&=& \int_t^T\,\D s\,\,\sqrt{\xi_t(s)} \int_s^T\,\frac{\D r}{(r-s)^\gamma}\,\sqrt{\xi_t(r)}\,\int_r^T\,\frac{\D u}{(u-r)^\gamma}\,\xi_t(u)\,\exp\left\{\eta^2\,F(s-t,r-t,u-t)\right\}\nonumber\\
\label{eq:J1}
\eea
and
\bea
J_t
&=&\int_t^T\,\D s\,\sqrt{\xi_t(s)} \int_s^T\,\D r\,\sqrt{\xi_t(r)}\,\int_r^T\,\frac{\D u}{(u-s)^\gamma\,(u-r)^\gamma}\,\xi_t(u)\,\exp\left\{\eta^2\,F(s-t,r-t,u-t)\right\}.\nonumber\\
\label{eq:J2}
\eea

One might ask whether it is possible to simplify these expressions using integration by parts as we did for $M \dm M$.  Unfortunately, it seems that we cannot.

In summary, though computations of diamond trees under rough Bergomi are possible in closed-form as shown above, computations get more and more complicated for higher order forests and numerical implementation is far from straightforward.  To make further progress, smart approximations will doubtless be required.

\bibliographystyle{alpha}

\bibliography{RoughVolatility}

\begin{appendix}
\section{Conditional expectations of products of forward variances under rough Bergomi}\label{sec:prodXiRoughBergomi}

\noindent We will need the following Lemma which is obtained by explicit integration.

\begin{lemma}\label{lem:Ggamma}
Let $u_j>u_i \geq s>t$ and $\gamma = \frac 12-H$. Then
\[
\int_t^{s}\,\frac{\D r}{(u_j-r)^\gamma\,(u_i-r)^\gamma} = \frac{1}{2\,H}\,(s-t)^{2\,H}\,G_\gamma\,\left(\frac{u_j-t}{s-t}, \frac{u_i-t}{s-t}\right) 
\]
where for $y\geq x \geq 1$,
\bea
G_\gamma(y,x)&=&2\,H\,\int_0^1\,\frac{\D r}{(y-r)^\gamma\,(x-r)^\gamma}\nonumber\\
&=& \frac{1-2\,\gamma}{(1-\gamma) \,  (y-x)}\,\bigg\{
{x^{1-\gamma } y^{1-\gamma
   } \, _2F_1\left(1,2-2 \gamma ;2-\gamma ;\frac{y}{y-x}\right)}\nonumber\\
   && \quad\quad\quad
   - {(x-1)^{1-\gamma } (y-1)^{1-\gamma } \, _2F_1\left(1,2-2 \gamma ;2-\gamma
   ;\frac{y-1}{y-x}\right)}
   \bigg\}.\nonumber\\
   \label{eq:Ggamma}
\eea
Equation~\eqref{eq:Ggamma} has the special cases
$$
G_\gamma(y):= G_\gamma(y,1) =\frac{1-2\,\gamma}{(1-\gamma) \,  (y-1)}\,{ y^{1-\gamma
   } \, _2F_1\left(1,2-2 \gamma ;2-\gamma ;\frac{y}{y-1}\right)}
$$ 
with $y>1$.  In particular
$
G_\gamma(1) = G_\gamma(1,1) =1
$.
\end{lemma}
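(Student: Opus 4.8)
The plan is to split the statement into two independent tasks: an elementary scaling reduction of the first display to the universal integral defining $G_\gamma$, and the explicit evaluation of that integral as a Gauss hypergeometric function, from which the special cases fall out.

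First I would perform the affine substitution $r = t + (s-t)\rho$ in $\int_t^{s}(u_j-r)^{-\gamma}(u_i-r)^{-\gamma}\,\D r$. Writing $y=\tfrac{u_j-t}{s-t}$ and $x=\tfrac{u_i-t}{s-t}$, each factor becomes $u_j-r=(s-t)(y-\rho)$ and $u_i-r=(s-t)(x-\rho)$, so the integral equals $(s-t)^{1-2\gamma}\int_0^1(y-\rho)^{-\gamma}(x-\rho)^{-\gamma}\,\D\rho$. Since $\gamma=\tfrac12-H$ gives $1-2\gamma=2H$, and $G_\gamma$ is defined as $2H$ times this unit-interval integral, the right-hand side is exactly $\tfrac1{2H}(s-t)^{2H}G_\gamma(y,x)$. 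This matches the first display and reduces everything to evaluating $P(y,x):=\int_0^1(y-r)^{-\gamma}(x-r)^{-\gamma}\,\D r$ for $y\ge x\ge1$.

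For $P$ I would produce an explicit antiderivative. The substitution $v=y-r$ followed by $\tau=(y-x)/v$ turns $\int(y-r)^{-\gamma}(x-r)^{-\gamma}\,\D r$ into a constant multiple of the incomplete-Beta integral $\int_0^{u}\tau^{b-1}(1-\tau)^{-a}\,\D\tau=\tfrac{u^b}{b}\,{}_2F_1(a,b;b+1;u)$, with $a=\gamma$, $b=2\gamma-1$ and $u=(y-x)/(y-r)$. Collecting the powers (the $(y-x)$-prefactors cancel) yields the antiderivative
\[
\Psi(r)=\frac{(y-r)^{1-2\gamma}}{2\gamma-1}\,{}_2F_1\!\left(\gamma,\,2\gamma-1;\,2\gamma;\,\frac{y-x}{y-r}\right),
\]
so that $P=\Psi(1)-\Psi(0)$ by the fundamental theorem of calculus, the argument $(y-x)/(y-r)\in(0,1]$ staying inside the disc of convergence throughout $[0,1]$. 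It then remains to recast this into the stated symmetric form, whose arguments $\tfrac{y}{y-x},\tfrac{y-1}{y-x}$ are precisely the \emph{reciprocals} of those in $\Psi$. I would therefore apply the $z\mapsto 1/z$ linear connection formula for ${}_2F_1$. The key observation is that one of its two terms carries the factor ${}_2F_1(2\gamma-1,0;\gamma;\cdot)\equiv1$ together with a prefactor $\propto(y-r)^{1-2\gamma}\big(\tfrac{y-x}{y-r}\big)^{1-2\gamma}$ that is constant in $r$; it therefore cancels in $\Psi(1)-\Psi(0)$. The surviving term has parameters $(\gamma,1-\gamma;2-\gamma)$, and a single Euler transformation ${}_2F_1(a,b;c;w)=(1-w)^{c-a-b}{}_2F_1(c-a,c-b;c;w)$ shifts these to $(1,2-2\gamma;2-\gamma)$ while generating $(1-w)^{1-\gamma}$ with $1-w=-\tfrac{x-r}{y-x}$; combined with the leftover powers of $y-r$ and $y-x$ this assembles exactly the prefactor $(x-r)^{1-\gamma}(y-r)^{1-\gamma}/(y-x)$ and, after multiplying by $2H=1-2\gamma$, the overall constant $\tfrac{1-2\gamma}{(1-\gamma)(y-x)}$.

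The main obstacle will be precisely this transformation bookkeeping: tracking the sign and branch of the factors $(-z)^{-a},(-z)^{-b},(1-w)^{1-\gamma}$ (all evaluated at real arguments $\ge1$), so that the scattered phases collapse to $+1$ and the $\Gamma$-ratios from the connection formula simplify via $\Gamma(2\gamma)/\Gamma(2\gamma-1)=2\gamma-1$ and $\Gamma(\gamma-1)/\Gamma(\gamma)=-1/(1-\gamma)$ to the single clean constant claimed. Once the closed form is established, the special cases are immediate: setting $x=1$ makes $(x-1)^{1-\gamma}=0$ annihilate the second bracket and recovers $G_\gamma(y)=G_\gamma(y,1)$, while $G_\gamma(1)=G_\gamma(1,1)=1$ needs no closed form at all, following directly from $2H\int_0^1(1-r)^{-2\gamma}\,\D r=2H/(1-2\gamma)=1$.
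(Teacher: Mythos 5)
Your proposal is correct, and in fact it supplies more than the paper records: the paper's entire justification for Lemma~\ref{lem:Ggamma} is the one-line remark that it ``is obtained by explicit integration,'' with no details, so your write-up is a genuine realization of that omitted computation rather than a divergence from it. Your two reductions are the right ones. The affine substitution gives $(s-t)^{1-2\gamma}=(s-t)^{2H}$ and the first display at once. Your antiderivative $\Psi(r)=\frac{(y-r)^{1-2\gamma}}{2\gamma-1}\,{}_2F_1\bigl(\gamma,2\gamma-1;2\gamma;\frac{y-x}{y-r}\bigr)$ is valid, but note that with $b=2\gamma-1\in(-1,0)$ the incomplete-beta integral $\int_0^u\tau^{b-1}(1-\tau)^{-a}\,\D\tau$ you cite is divergent at $0$; the correct justification is to differentiate the defining series of $\frac{u^b}{b}{}_2F_1(a,b;b+1;u)$ term by term, which shows it is an antiderivative for all $b\notin\{0,-1,\dots\}$. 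Your structural observations then go through: in the $z\mapsto1/z$ connection formula the $(-z)^{-b}$ term carries $(y-r)^{1-2\gamma}\bigl(\tfrac{y-x}{y-r}\bigr)^{1-2\gamma}=(y-x)^{1-2\gamma}$, constant in $r$, hence cancels in $\Psi(1)-\Psi(0)$; the surviving term has parameters $(\gamma,1-\gamma;2-\gamma)$, Euler's transformation converts these to $(1,2-2\gamma;2-\gamma)$, and the $\Gamma$-ratios $\Gamma(2\gamma)/\Gamma(2\gamma-1)=2\gamma-1$ and $\Gamma(\gamma-1)/\Gamma(\gamma)=-1/(1-\gamma)$ assemble the constant $\tfrac{1-2\gamma}{(1-\gamma)(y-x)}$ exactly as you say. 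The special cases are handled correctly.

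One refinement to your ``main obstacle,'' which as stated is slightly too optimistic: the phases do \emph{not} collapse to $+1$ term by term, and no branch choice makes each bracketed term in~\eqref{eq:Ggamma} individually real. The arguments $\tfrac{y}{y-x}$ and $\tfrac{y-1}{y-x}$ both lie on the cut $[1,\infty)$, so each term is a boundary value ${}_2F_1(\,\cdot\,;w\pm\ui 0)$ with nonzero imaginary part. The identity holds because the anomalies cancel \emph{between} the two terms: by the discontinuity formula for ${}_2F_1$ across the cut, the imaginary part of each term, after multiplication by its prefactor, is a fixed constant times $(w-1)^{\gamma-1}w^{\gamma-1}(x-r)^{1-\gamma}(y-r)^{1-\gamma}/(y-x)=(y-x)^{1-2\gamma}$ with $w=\tfrac{y-r}{y-x}$, i.e., the same at $r=0$ and $r=1$, so it drops out of the difference by the very mechanism that killed your constant connection term. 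You should therefore fix one side of the cut consistently for both terms and record this cancellation (this is how the stated formula must be read); alternatively you can bypass the cut entirely, either by stopping at your real antiderivative $\Psi$, whose argument stays in $(0,1]$, or by verifying the stated closed form directly through differentiation in $r$ using contiguous relations. With that point made precise, your proof is complete.
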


\begin{lemma} \label{lem:prodSE}
Consider the Rough Bergomi model
\[
V_u = \xi_u(u) = \xi_t(u)\,\cEE{\tilde \eta\,\int_t^u\,\frac{\D W_r}{(u-r)^\gamma}}
\]
and let $u_n>...>u_i>u_{i-1}>...\geq s>t$, $i= 1,...n$. Then,
\beas
&&\eef{\prod_{i=1}^n{\xi_s(u_i)}^{\alpha_i}}\\
&=& \left(\prod_{i=1}^n{\xi_t(u_i)}^{\alpha_i}\right)\,
\exp\left\{ \eta^2\,(s-t)^{2\,H}\,\sum_{j>i}\alpha_i\,\alpha_j\,G_\gamma\,\left(\frac{u_j-t}{s-t}, \frac{u_i-t}{s-t}\right) \right\}\\
&&\quad\quad\quad\quad\quad\quad\quad\quad\times \exp\left\{\frac{1}{2}\,\eta^2\,\sum_{i=1}^n\,\alpha_i\,(\alpha_i-1)\,\left[(u_i-t)^{2\,H}-(u_i-s)^{2\,H}\right]\,\right\}.
\eeas
\end{lemma}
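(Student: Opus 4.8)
The plan is to exploit the explicit lognormal structure of forward variances in the rough Bergomi model, reducing the statement to a conditional Gaussian moment computation. First I would solve the stochastic exponential explicitly: for each fixed $u_i > s$, the Dol\'eans--Dade exponential yields
\[
\xi_s(u_i) = \xi_t(u_i)\,\exp\left\{\tilde\eta\int_t^s \frac{\D W_r}{(u_i-r)^\gamma} - \frac{\tilde\eta^2}{2}\int_t^s \frac{\D r}{(u_i-r)^{2\gamma}}\right\}.
\]
Writing $G_i := \tilde\eta\int_t^s (u_i-r)^{-\gamma}\,\D W_r$, the drift integral evaluates by the elementary substitution $v = u_i - r$ to $\tfrac{\eta^2}{2}\left[(u_i-t)^{2H}-(u_i-s)^{2H}\right]$, using $2\gamma = 1-2H$ and the identity $\tilde\eta^2 = 2H\eta^2$.

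Second, I would raise to the power $\alpha_i$, take the product over $i$, and pull the $\cF_t$-measurable prefactor $\prod_i \xi_t(u_i)^{\alpha_i}$ outside the conditional expectation. What remains is $\eef{\exp\{\sum_i \alpha_i G_i\}}$ times the deterministic drift correction. Conditional on $\cF_t$, the quantity $\sum_i \alpha_i G_i$ is centered Gaussian, so the Gaussian moment generating function gives $\eef{\exp\{\sum_i \alpha_i G_i\}} = \exp\{\tfrac12 \sum_{i,j}\alpha_i\alpha_j\,\covf{G_i,G_j}\}$, where by It\^o isometry $\covf{G_i,G_j} = \tilde\eta^2\int_t^s (u_i-r)^{-\gamma}(u_j-r)^{-\gamma}\,\D r$. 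I would then evaluate these covariance integrals: for $i\neq j$, Lemma~\ref{lem:Ggamma} converts the integral into $\eta^2(s-t)^{2H}\,G_\gamma\!\left(\tfrac{u_j-t}{s-t},\tfrac{u_i-t}{s-t}\right)$ (the $2H$ in Lemma~\ref{lem:Ggamma} again cancelling against $\tilde\eta^2$), while for $i=j$ the integral is the same elementary one appearing in the drift and equals $\eta^2\!\left[(u_i-t)^{2H}-(u_i-s)^{2H}\right]$.

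Finally I would collect terms. The off-diagonal half-variance contributes $\eta^2(s-t)^{2H}\sum_{j>i}\alpha_i\alpha_j\,G_\gamma\!\left(\tfrac{u_j-t}{s-t},\tfrac{u_i-t}{s-t}\right)$, which is exactly the first exponential in the statement. The diagonal half-variance $\tfrac{\eta^2}{2}\sum_i\alpha_i^2\!\left[(u_i-t)^{2H}-(u_i-s)^{2H}\right]$ combines with the drift correction $-\tfrac{\eta^2}{2}\sum_i\alpha_i\!\left[(u_i-t)^{2H}-(u_i-s)^{2H}\right]$ to give $\tfrac{\eta^2}{2}\sum_i\alpha_i(\alpha_i-1)\!\left[(u_i-t)^{2H}-(u_i-s)^{2H}\right]$, matching the second exponential.

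The computation is routine Gaussian calculus once the lognormal form is in place; the only genuinely delicate point is the bookkeeping that merges the It\^o drift correction with the diagonal half-variance into the $\alpha_i(\alpha_i-1)$ structure. This cancellation encodes the martingale normalization and must be tracked carefully, together with the repeated simplification $\tilde\eta^2 = 2H\eta^2$ that removes the $1/(2H)$ factors produced both by Lemma~\ref{lem:Ggamma} and by the diagonal power integration. Convergence of the singular integral $\int_t^s (u_i-r)^{-2\gamma}\,\D r$ requires $2\gamma<1$, i.e.\ $H>0$, which holds throughout, so all the Gaussian integrals are well defined.
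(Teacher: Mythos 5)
Your proof is correct and takes essentially the same route as the paper's: both reduce the claim to conditional lognormal/Gaussian calculus, evaluate the cross-covariance integrals via Lemma~\ref{lem:Ggamma} and the diagonal and drift integrals by the substitution $2\gamma = 1-2H$ with $\tilde\eta^2 = 2H\eta^2$, and merge the It\^o compensator with the diagonal half-variance to produce the $\alpha_i(\alpha_i-1)$ terms. The only difference is presentational: the paper keeps the computation packaged in stochastic-exponential identities, namely $\cE(X)^\alpha = \cE(\alpha X)\,\e^{\frac{\alpha(\alpha-1)}{2}\var[X]}$ and $\prod_i \cE(X_i) = \cE\left(\sum_i X_i\right)\e^{\sum_{j>i}\cov[X_i,X_j]}$ with $\eef{\cE(\cdot)}=1$ at the end, whereas you unwrap the Dol\'eans--Dade exponential into explicit lognormal form and apply the Gaussian moment generating function directly.
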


\begin{proof}
Let $X$ be a Gaussian random variable and $\alpha \in \mathbb{R}$.  Then
\[
\cEE{X}^\alpha = \cEE{\alpha\,X} \,\exp\left\{\frac{\alpha\,(\alpha-1)}{2}\,\var[X]\right\}.
\]
Also, with $\gamma = \frac 12 - H$, $\tilde \eta = \eta\,\sqrt{2\,H}$ and $u > s >t$,
\[
\xi_s(u) = \xi_t(u)\,\cEE{\tilde \eta\,\int_t^s\,\frac{\D W_r}{(u-r)^\gamma}}
\]
so
\[
\xi_s(u)^\alpha = \xi_t(u)^\alpha\,\cEE{\alpha\,\tilde \eta\,\int_t^s\,\frac{\D W_r}{(u-r)^\gamma}}\,\exp\left\{\frac12\,\eta^2\,\,\alpha\,(\alpha-1)\,\left[(u-t)^{2\,H}-(u-s)^{2\,H}\right]\right\}.
\]
Also, if $X_i, i=1,...,n$ are zero mean Gaussian random variables, then
\[
\prod_{i=1}^n \cE(X_i) = \cE\left(\sum_{i=1}^n\,X_i\right)\,\exp\left\{\sum_{j>i}^n\,\cov[X_i\,X_j]\right\}.
\]
Finally, assuming wlog that $u_j > u_i$, we have from Lemma~\ref{lem:Ggamma} that
\beas
\eef{\alpha_j\,\tilde \eta\,\int_t^s\,\frac{\D W_r}{(u_j-r)^\gamma} \,\alpha_i\,\tilde \eta\,\int_t^s\,\frac{\D W_{r'}}{(u_i-{r'})^\gamma}}
&=&   \alpha_i\,\alpha_j\, {\tilde \eta}^2\,  \int_t^s\,\frac{\D r}{(u_j-r)^\gamma\,(u_i-r)^\gamma} \\
&=& \alpha_i\,\alpha_j\, \eta^2\,  (s-t)^{2\,H}\,G_\gamma\,\left(\frac{u_j-t}{s-t}, \frac{u_i-t}{s-t}\right).
\eeas
\end{proof}

\end{appendix}

\end{document}